\pgfplotsset{compat=1.17} 
\DeclareMathOperator*{\argmax}{arg\,max}
\newcommand{\MAXUSW}{\texttt{MAX-USW}\xspace}
\newcommand{\MMS}{\texttt{MMS}\xspace}
\newcommand{\MWNW}{\texttt{MWNW}\xspace}
\newcommand{\R}{\mathbb{R}}
\newcommand{\Z}{\mathbb{Z}}
\newcommand{\FindDesired}{\texttt{Find-Desired}}
\newcommand{\GetDistances}{\texttt{Get-Distances}}
\newcommand{\prev}{\texttt{prev}}
\renewcommand{\cal}[1]{\mathcal{#1}}
\newtheorem{theorem}{Theorem}[section]
\newtheorem{lemma}[theorem]{Lemma}
\newtheorem{prop}[theorem]{Proposition}
\newtheorem{corr}[theorem]{Corollary}
\theoremstyle{definition}
\theoremstyle{definition}
\newtheorem{definition}[theorem]{Definition}
\newtheorem{example}[theorem]{Example}
\title{A General Framework for Fair Allocation under Matroid Rank Valuations}
\author{Vignesh Viswanathan and Yair Zick \\
University of Massachusetts, Amherst \\
{\texttt{\{vviswanathan, yzick\}@umass.edu}}}
\date{}
\begin{document}

\maketitle

\begin{abstract}
We study the problem of fairly allocating a set of indivisible goods among agents with matroid rank valuations --- every good provides a marginal value of $0$ or $1$ when added to a bundle and valuations are submodular.
We generalize the Yankee Swap algorithm to create a simple framework, called {\em General Yankee Swap}, that can efficiently compute allocations that maximize any justice criterion (or fairness objective) satisfying some mild assumptions. 
Along with maximizing a justice criterion, General Yankee Swap is guaranteed to maximize utilitarian social welfare, ensure strategyproofness and use at most a quadratic number of valuation queries. 
We show how General Yankee Swap can be used to compute allocations for five different well-studied justice criteria:
\begin{inparaenum}[(a)]
    \item Prioritized Lorenz dominance,
    \item Maximin fairness,
    \item Weighted leximin, 
    \item Max weighted Nash welfare, and
    \item Max weighted $p$-mean welfare.
\end{inparaenum}
In particular, our framework provides the first polynomial time algorithms to compute weighted leximin, max weighted Nash welfare and max weighted $p$-mean welfare allocations for agents with matroid rank valuations.
\end{abstract}

\section{Introduction}\label{sec:intro}
Consider the problem of assigning people to activities (say, a community center offering after-school activities, or university students signing up for classes). People have preferences over the activities they want to take; however, there is a limited number of space available in each activity. 
The assignment must satisfy several additional constraints; for example, people may not sign up for activities with conflicting schedules, and there may also be limits on the maximal number of activities one can sign up for. 
In addition, people often have \emph{priorities}: community center members should be offered priority over non-members; similarly, students closer to graduation may be given priority over first-years when choosing electives.
This problem can be naturally modeled as an instance of a \emph{fair allocation} problem: people are agents who have a \emph{utility} for receiving bundles of \emph{items} (activities). 
Our objective is to efficiently compute an \emph{allocation} (an assignment of activities to people) that satisfies certain \emph{justice criteria}. 
For example, one might be interested in computing an \emph{efficient} allocation --- in the activity allocation setting, this would be an allocation that maximizes the number of activities assigned to people who are willing and able to take them. 
Alternatively, one might want to find an \emph{envy-free} assignment --- one where every person prefers their assigned set of activities to that of any other person.  
Finding these fair and efficient allocations is computationally intractable under general agent utilities \citep{bouveret2016handbook}. 
However, we assume that agent preferences follow a combinatorial structure: for example, under mild assumptions, agent preferences over activities are \emph{submodular}; in economic jargon, they have decreasing returns to scale. 
In addition, assuming people are only interested in taking the maximal number of activities relevant to them, their gain from taking an additional activity is either $0$ or $1$. 
This class of preferences is known as \emph{binary submodular} valuations \citep{benabbou2021MRF}. 

Recently, \citet{viswanathan2022yankee} introduced the \emph{Yankee Swap} algorithm for computing fair and efficient allocations. 
The algorithm starts with all items unassigned, and proceeds in rounds; at every round, an agent is picked to play. The agent can choose to either pick an unassigned item to add to its bundle, or to steal an item from another agent. If they choose to steal, then they initiate a \emph{transfer path}, where each agent steals an item from another, until the final agent takes an unassigned item. This continues until no agents wants to take any unassigned item.

If agents have \emph{binary submodular} valuations, Yankee Swap is guaranteed to output a \emph{Lorenz dominating} allocation. 
When agents have binary submodular valuations, Lorenz dominating allocations are extremely appealing \cite{Babaioff2021Dichotomous}: they 
\begin{inparaenum}
\item maximize both utilitarian and Nash welfare; 
\item are envy-free up to any item (EFX); 
\item offer each agent at least half their maximin share and 
\item can be computed truthfully.
\end{inparaenum}
We explore an extension of this framework where agents have \emph{weights} (also known as \emph{entitlements} or \emph{priorities}), as proposed by \citet{chakraborty2021weighted}; that is, some agents are intrinsically more important than others, and should be given higher priority as a function of their weight.
When agents have weights, Lorenz dominating allocations lose their appeal \cite{chakraborty2021pickingsequences}: they no longer offer approximate envy-freeness, MMS or Nash welfare guarantees. 
In fact, \citet{chakraborty2021pickingsequences} argue against their use in weighted settings.
While most solution concepts in the unweighted setting have been shown to be efficiently computable \citep{Barman2021MRFMaxmin,Babaioff2021Dichotomous}, little is known about the weighted setting.
Our main result bridges this gap: despite the incompatibility of the different solutions in the weighted domain, we propose a 
\begin{displayquote}
\textit{fast, simple and easily explainable algorithm that can compute a variety of solutions in weighted settings and beyond.}
\end{displayquote}


\subsection{Our Contribution}
We present a surprising algorithmic result: a minor modification to the Yankee Swap algorithm --- the order in which it lets agents play --- allows it to compute allocations satisfying a broad range of justice criteria. 
Justice criteria (denoted by $\Psi$) can be thought of as ways to compare allocations; for example, an allocation $X$ is better than $Y$ according the Nash social welfare criterion if the product of agent utilities under $X$ is greater than the product of utilities under $Y$. 

The General Yankee Swap framework picks an agent at each iteration according to a general {\em gain function} $\phi$. 
The gain function $\phi$ takes as input an agent $i$ and an allocation $X$ and outputs a `score' corresponding to the `gain' of adding a good to $i$ according to $\Psi$. 
The selected agent can either take an unassigned item they like, or steal an item from someone else. 
If they steal, then the agent who had an item stolen from them gets to either take an unassigned item or steal yet another item. This goes on until some agent picks an unassigned item they like. If no such path exists, the selected agent is removed from play.  

General Yankee Swap computes a $\Psi$ maximizing allocation for any $\Psi$ as long as the following two conditions hold:
\begin{inparaenum}[(a)]
    \item $\Psi$ maximizing allocations are Pareto dominant, and 
    \item $\Psi$ admits a coherent gain function $\phi$ such that $\phi(X, i)$ is a decreasing function of the utility of agent $i$ under the allocation $X$.
\end{inparaenum}
Several well-known justice criteria satisfy these conditions, summarized in Table \ref{table:results-summary}. 

\begin{table}
\centering
\begin{tabularx}{\textwidth} {>{\hsize=.6\hsize}XX}
\toprule
\textbf{Justice Criterion}&\textbf{Gain Function $\phi(X,i)$}\\
\toprule
 Lorenz Dominance (Thm. \ref{thm:lorenz-dominance}) & $-v_i(X_i)$ \\
 \midrule
 Weighted Leximin (Thm. \ref{thm:weighted-leximin}) & $-\frac{v_i(X_i)}{w_i}$ \\
 \midrule
 Maximin Fair Share (Corr. \ref{corr:MMS-MRF})& $-\frac{v_i(X_i)}{\MMS_i}$ if $\MMS_i>0$, and $-\infty$ otherwise. \\
 \midrule
 Weighted Nash (Thm. \ref{thm:weighted-nash})& $\big (1 + \frac1{v_i(X_i)}\big )^{w_i}$ if $v_i(X_i)>0$; a large $M$ otherwise \\
 \midrule
Weighted $p$-Mean (Thm. \ref{thm:weighted-p-mean})& $\texttt{sign}(p) \times w_i \times [(v_i(X_i) + 1)^p - v_i(X_i)^p]$\\
 \midrule
Weighted Harmonic \cite{montanari2022weightedenvy} & $\frac{w_i}{v_i(X_i) + 1}$ \\
 \bottomrule
\end{tabularx}
\caption{A summary of justice criteria for which General Yankee Swap (\Cref{algo:weighted-yankee-swap}) computes an optimal solution. $v_i(X_i)$ denotes the utility of agent $i$ under the allocation $X$, $w_i$ denotes the weight of agent $i$ and $\MMS_i$ denotes the maximin share of agent $i$.}
\label{table:results-summary}
\end{table}
In addition to weighted settings, General Yankee Swap can produce an allocation that guarantees every agent a maximal fraction of their maximin share, \emph{assuming that we have precomputed the maximin share of every agent $i$, $\MMS_i$}.  
Furthermore, the allocations that General Yankee Swap outputs always maximize utilitarian social welfare, and are always truthful. 
Thus, despite its simplicity, General Yankee Swap offers a unified framework for computing various optimal allocations, under different definitions of optimality.

\subsection{Our Techniques}
There are two broad techniques used in this paper. 
The first is that of {\em path augmentations} in the item exchange graph.
The item exchange graph is a directed graph over the set of goods where an edge exists from one good $g$ to another good $g'$ if the agent who is currently allocated $g$ would be indifferent to swapping the good $g$ with $g'$. Transferring goods along a path in the exchange graph can be used to manipulate allocations and improve them; this is a special case of a general optimization technique known as path augmentation \cite{schrijver-book}.
Path augmentations have been extensively used in prior work \citep{viswanathan2022yankee,Barman2021MRFMaxmin}. 
However, prior works restrict their attention to one specific justice criterion. 
In our work, we exploit the power of path augmentations even further, and apply this technique to a broad range of justice criteria. 
We also use path augmentations to show that our framework is truthful --- adopting a proof style from \citet{Babaioff2021Dichotomous}.

Our second technique is a careful combination of binary search and breadth first search for the efficient computation of paths on the exchange graph. 
This technique allows us to find paths on the exchange graph without explicitly constructing it. 
It was first introduced by \citet{Chakrabarty2019MatroidIntersection} to create fast algorithms for the matroid intersection problem.
We use it to improve the runtime of our algorithm from cubic valuation queries \citep{viswanathan2022yankee} to quadratic valuation queries (ignoring logarithmic factors). 

\subsection{Related Work}
Several works explore fair allocation under binary submodular valuations. \citet{benabbou2019group} present algorithms that compute fair and efficient allocations when agents have binary matching-based (OXS) valuations, a result later extended to general binary submodular valuations by \citet{benabbou2021MRF}.  \citet{Babaioff2021Dichotomous} extend this work further, showing that it is possible to \emph{truthfully} compute Lorenz dominating allocations in polynomial time; Lorenz dominance is a stronger notion of fairness than leximin and it implies a host of other fairness properties. \citet{barman2022groupstrategyproof} add to this result, showing that Lorenz dominating allocations can be computed in a {\em group} strategyproof manner.
\citet{Barman2021MRFMaxmin} show that it is possible to compute an allocation which guarantees each agent their maximin share. 
Our results use some technical lemmas from their work.

Our work also contributes to the existing literature on fair allocation with asymmetric agents. 
Several weighted fairness metrics are proposed in the literature. \citet{farhadi2019wmms} introduce a weighted notion of the maximin share. 
\citet{chakraborty2021weighted} introduce the notion of weighted envy-freeness and propose algorithms to compute weighted envy-free up to one item. 
These results are extended by \citet{chakraborty2022generalized} and \citet{montanari2022weightedenvy}. 
\citet{azis2020wprop} introduced the notion of weighted proportionality. 
\citet{babaioff2021wmms} proposed and studied additional extensions of the maximin share to the weighted setting. 

Several works study computational aspects in the weighted domain, proposing algorithms for weighted envy-freeness \cite{chakraborty2021weighted,chakraborty2021pickingsequences}, weighted Nash welfare \cite{garg2021mwnw,Suksumpong2022weightednash}, proportionality in chores \cite{li2022weightedproportional}, and weighted maximin share for chores \cite{aziz2019wmms}. 
\citeauthor{Suksumpong2022weightednash}'s approach uses path transfer arguments as well; however, our analysis holds for a more general class of valuations (binary submodular vs. binary additive).

The course allocation domain is a relevant application domain for our work. The main concern in applying our algorithmic framework in that domain is that course conflicts do not generally induce submodular preferences. Indeed, when items have arbitrary conflicts (encoded by a conflict graph), the problem of computing an allocation maximizing the minimal utility of any agent is computationally intractable \cite{chiarelli2022fairnessconflicts}. However, we find that in practice (at least in the authors' college), course conflicts satisfy a minimal condition that induces submodular preferences: if course A conflicts with course B, and course B conflicts with course C, then A conflicts with C.  
\section{Preliminaries}
We use $[t]$ to denote the set $\{1, 2, \dots, t\}$. 
For ease of readability, for an element $g$ and a set $A$, we replace $A \cup \{g\}$ and $A \setminus \{g\}$ with $A + g$ and $A - g$ respectively. 

We have a set of $n$ {\em agents} $N = [n]$ and a set of $m$ {\em goods} $G = \{g_1, g_2, \dots, g_m\}$. 
Each agent $i \in N$ has a {\em valuation function} $v_i: 2^G \rightarrow \R_{\ge 0}$; $v_i(S)$ specifies the value agent $i$ has for the set of goods $S \subseteq G$. Each agent $i$ also has a positive {\em weight} $w_i \in \R^+$ that corresponds to their {\em entitlement}; we do not place any other constraint on the entitlement other than positivity. We use $\Delta_i(S, g) = v_i(S+g) - v_i(S)$ to denote the marginal gain of adding the good $g$ to the bundle $S$ for the agent $i$.
Throughout the paper, we assume each $v_i$ is a {\em matroid rank function} (MRF). A function $v_i$ is a matroid rank function if 
\begin{inparaenum}[(a)]
    \item $v_i(\emptyset) = 0$, 
    \item for any $g \in G$ and $S \subseteq G$, we have $\Delta_i(S, g) \in \{0, 1\}$, and 
    \item for any $S \subseteq T \subseteq G$ and a good $g$, we have $\Delta_i(S, g) \ge \Delta_i(T, g)$.
\end{inparaenum}
These functions are also referred to as {\em binary submodular valuations}; we use the two terms interchangably.

An {\em allocation} $X$ is a partition of the set of goods into $n+1$ sets $(X_0, X_1, \dots, X_n)$ where each agent $i \in N$ gets the bundle $X_i$ and $X_0$ consists of the unallocated goods. An allocation $X$ is said to be {\em non-redundant} if for all $i \in N$, we have $v_i(X_i) = |X_i|$. For any allocation $X$, $v_i(X_i)$ is referred to as the {\em utility} or {\em value} of agent $i$ under the allocation $X$. For ease of analysis, we sometimes treat $0$ as an agent with valuation function $v_0(S) = |S|$ and bundle $X_0$. None of our justice criteria take the agent $0$ into account. By our choice of $v_0$, any allocation which is non-redundant for the set of agents $N$ is trivially non-redundant for the set of agents $N+0$. 

We have the following simple useful result about non-redundant allocations. Variants of this result have been shown by \citet{benabbou2021MRF} and \citet{viswanathan2022yankee}.

\begin{lemma}[\citet{benabbou2021MRF}, \citet{viswanathan2022yankee}]\label{lem:non-redundant-wlog}
Let $X$ be an allocation. There exists a non-redundant allocation $X'$ such that $v_i(X'_i) = v_i(X_i)$ for all $i \in N$.
\end{lemma}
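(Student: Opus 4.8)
The plan is to build $X'$ from $X$ by trimming each agent's bundle down to a non-redundant ``core'' of the same value, moving every discarded good into the unallocated pile $X_0$. Since property (b) forces every marginal to lie in $\{0,1\}$, we always have $v_i(S) \le |S|$, so non-redundancy is exactly the statement that $X'_i$ contains no good that is redundant given the rest. Because both the non-redundancy condition and the target equality $v_i(X'_i) = v_i(X_i)$ are stated per agent, and because discarded goods can all be dumped into $X_0$ without affecting any $v_i$, I would process each agent $i \in N$ independently; this immediately guarantees that $X'$ is a genuine allocation (a partition of $G$), so no joint feasibility argument is needed.

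For a fixed agent $i$, I would construct $X'_i$ greedily. Enumerate $X_i = \{h_1, \dots, h_k\}$ in any order, maintain a running set $S$ initialized to $\emptyset$, and for each $h_j$ add it to $S$ precisely when $\Delta_i(S, h_j) = 1$; let $X'_i$ be the final set $S$. By property (a) and the construction, every good actually placed in $X'_i$ raised the value by exactly one, so $v_i(X'_i) = |X'_i|$, which is non-redundancy. (In matroid language, $X'_i$ is just a maximal independent subset, i.e. a basis, of $X_i$.)

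The only real content --- and the step I expect to be the crux --- is showing that nothing is lost, i.e. $v_i(X'_i) = v_i(X_i)$. This is where submodularity (property (c)) does the work. Each discarded good $h$ was skipped because its marginal over some set $S \subseteq X'_i$ was $0$; diminishing returns then gives $\Delta_i(X'_i, h) \le \Delta_i(S, h) = 0$, and hence $\Delta_i(T, h) = 0$ for every $T \supseteq X'_i$. Adding the discarded goods back to $X'_i$ one at a time, each marginal is therefore $0$, so the total value is unchanged and $v_i(X_i) = v_i(X'_i)$. Everything outside this monotone-accounting argument is routine bookkeeping; the argument itself hinges entirely on the diminishing-returns property, which is precisely the hypothesis that separates this claim from the trivial bound $v_i(S) \le |S|$.
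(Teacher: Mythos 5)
Your proof is correct. The paper itself states this lemma without proof, citing \citet{benabbou2021MRF} and \citet{viswanathan2022yankee}; your argument---greedily trimming each $X_i$ to a maximal independent subset, using submodularity to show the discarded goods have zero marginal value over any superset of $X'_i$---is precisely the standard ``cleaning'' argument used in those works, so there is nothing to add.
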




\subsection{Item Exchange Graph}
We define the {\em exchange graph} of a non-redundant allocation $X$ (denoted by $\cal G(X)$) as a directed graph defined over the set of goods $G$. An edge exists from good $g \in X_i$ to another good $g'$ if $v_i(X_i - g + g') = v_i(X_i)$. Intuitively, this means that from the perspective of agent $i \in N+0$ (who owns $g$), $g$ can be replaced with $g'$ without reducing agent $i$'s utility. 
The exchange graph is a useful representation since it can be used to compute valid transfers of goods between agents. 

Let $P = (g_1, g_2, \dots, g_t)$ be a path in the exchange graph for the allocation $X$. 
We define a transfer of goods along the path $P$ in the allocation $X$ as the operation where $g_t$ is given to the agent who has $g_{t-1}$, $g_{t-1}$ is given to the agent who has $g_{t-2}$ and so on until finally $g_1$ is discarded. 
This transfer is called {\em path augmentation}; the bundle $X_i$ after path augmentation with the path $P$ is denoted by $X_i \Lambda P$ and defined as $X_i \Lambda P = (X_i - g_t) \oplus \{g_j, g_{j+1} : g_j \in X_i\}$ where $\oplus$ denotes the symmetric set difference operation. 
While the conventional notation for path augmentation uses $\Delta$ \citep{Barman2021MRFMaxmin,schrijver-book}, we replace it with $\Lambda$ to avoid confusion with the other definition of $\Delta$ as the marginal gain of adding an item to a bundle. 

For any non-redundant allocation $X$ and agent $i$, we define $F_i(X) = \{g \in G: \Delta_i(X_i, g) = 1\}$ as the set of goods which give agent $i$ a marginal gain of $1$. For any agent $i$, let $P = (g_1, \dots, g_t)$ be the shortest path from $F_i(X)$ to $X_j$ for some $j \ne i$. Then path augmentation with the path $P$ and giving $g_1$ to $i$ results in an allocation where $i$'s value for their bundle goes up by $1$, $j$'s value for their bundle goes down by $1$ and all the other agents do not see any change in value. This is formalized below and exists both in \citet[Lemma 5]{viswanathan2022yankee} and \citet[Lemma 1]{Barman2021MRFMaxmin}.

\begin{lemma}[\citet{viswanathan2022yankee, Barman2021MRFMaxmin}]\label{lem:path-augmentation}
    Let $X$ be a non-redundant allocation. Let $P = (g_1, \dots, g_t)$ be the shortest path from $F_i(X)$ to $X_j$ for some $i \in N+0$ and $j \in N + 0 - i$. We define an allocation $Y$ as follows:
    \begin{align*}
        Y_k = 
        \begin{cases}
            X_k \Lambda P & k \in N + 0 - i \\
            X_i \Lambda P + g_1 & k = i \\
        \end{cases}
    \end{align*}
    Then, we have for all $k \in N +0 - i - j$, $v_k(Y_k) = v_k(X_k)$, $v_i(Y_i) = v_i(X_i) + 1$ and $v_j(Y_j) = v_j(X_j) - 1$. Furthermore, the allocation $Y$ is non-redundant.
\end{lemma}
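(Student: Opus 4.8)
The plan is to decompose the global path augmentation into a collection of local, single-matroid exchanges --- one per agent --- and to show that each such exchange preserves independence (equivalently, preserves $v_k(Y_k) = |Y_k|$), relying on the fact that $P$ is a \emph{shortest} path. Throughout I would use that non-redundancy makes each $X_k$ an independent set of agent $k$'s matroid, so that $v_k(X_k) = |X_k|$, and that a single swap $X_k - g + g'$ keeps the value at $|X_k|$ precisely when the exchange-graph edge $g \to g'$ is present.

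First I would pin down, for each agent $k \in N+0$, exactly how $Y_k$ differs from $X_k$. Reading off the transfer, agent $k$ surrenders every path good it currently owns and receives the successor of each such good, with the two endpoint conventions that $g_1$ is additionally handed to $i$ and that $g_t$ leaves $j$ with no replacement. Writing $O_k = \{g_\ell \in X_k : \ell \in [t-1]\}$ and $I_k = \{g_{\ell+1} : g_\ell \in X_k,\ \ell \in [t-1]\}$, we have (away from the endpoints) $Y_k = (X_k \setminus O_k) \cup I_k$ with $|O_k| = |I_k|$. A short bookkeeping check then confirms that goods are conserved overall, that $|Y_i| = |X_i| + 1$ (since $g_1 \notin X_i$ and $g_1$ has no predecessor on $P$, so $g_1 \notin I_i$), and that $|Y_j| = |X_j| - 1$ (since the removed endpoint $g_t$ is unreplaced).

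The crux is showing each $Y_k$ is independent. Ordering the goods of $O_k$ by their position on $P$ as $x_1, \dots, x_p$ with received successors $y_1, \dots, y_p$, each edge of $P$ gives that $X_k - x_m + y_m$ is independent. Because $P$ is shortest, there is no ``forward chord'': for $m < m'$ there is no edge $x_m \to y_{m'}$, as such an edge would let us splice out the segment of $P$ between $x_m$ and $y_{m'}$ and produce a strictly shorter path from $F_i(X)$ to $X_j$. This is exactly the non-independence hypothesis of the standard single-matroid exchange lemma (``a unique matching in the exchange structure implies independence''), which then yields that $(X_k \setminus O_k) \cup I_k$ is independent. Applied to each $k \notin \{i,j\}$ this gives $v_k(Y_k) = |X_k| = v_k(X_k)$; applied to $j$, together with the unreplaced deletion of $g_t$, it gives an independent $Y_j$ of size $|X_j| - 1$, hence $v_j(Y_j) = v_j(X_j) - 1$. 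For agent $i$ I would run the same exchange lemma but treat the source good $g_1$ as an augmenting element: since $g_1 \in F_i(X)$ it is independent of $X_i$, and the no-chord property extends to rule out the remaining dependencies, so $Y_i = (X_i \, \Lambda \, P) + g_1$ is independent of size $|X_i| + 1$, i.e. $v_i(Y_i) = v_i(X_i) + 1$.

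Non-redundancy of $Y$ is then immediate, since every bundle has been shown to satisfy $v_k(Y_k) = |Y_k|$. The step I expect to be the main obstacle is the crux above: extracting the ``no forward chord'' condition cleanly from shortest-path minimality and checking that it matches precisely the hypotheses of the matroid exchange lemma --- in particular handling the two endpoints $g_1$ and $g_t$ and correctly accounting for agents who own several (possibly consecutive) goods along $P$, where the owned goods form internal chains that telescope in the symmetric-difference formula.
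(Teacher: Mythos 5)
The paper never proves this lemma: it is imported verbatim from \citet[Lemma 5]{viswanathan2022yankee} and \citet[Lemma 1]{Barman2021MRFMaxmin}, so the only meaningful comparison is against the standard proof in those cited works --- and your plan reconstructs exactly that proof. You decompose the augmentation agent by agent, use shortest-path minimality to forbid forward chords (an edge $x_m \to y_{m'}$ with $m < m'$ would splice out a segment and yield a strictly shorter path from $F_i(X)$ to $X_j$), and invoke the unique-perfect-matching exchange lemma (Krogdahl; Schrijver, Corollary 39.12a) to conclude that each $(X_k \setminus O_k) \cup I_k$ is independent, handling agent $j$ by deleting $g_t$ and using downward closure of independence. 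One of the complications you flag is actually vacuous: in a non-redundant allocation the exchange graph has no edge between two goods of the same bundle (if $g, g' \in X_k$ then $v_k(X_k - g + g') = |X_k| - 1 \ne |X_k|$), so consecutive path goods always have distinct owners and the ``telescoping chains'' you worry about never occur.

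The agent-$i$ step --- the one you yourself single out as the main obstacle --- is where your sketch is genuinely under-specified, because the unique-matching lemma requires two sets of equal size, and $O_i$ cannot be perfectly matched against $I_i + g_1$. The repair is to apply the lemma relative to the independent set $X_i + g_1$ with target $Y_i$, and for that you must check that the path edges survive the addition of $g_1$, i.e.\ that $(X_i + g_1) - x_m + y_m$ is still independent. This needs one further consequence of minimality beyond ``no forward chord'' among the $(x_m, y_{m'})$ pairs: no good $g_\ell$ with $\ell \ge 2$ lies in $F_i(X)$, since otherwise the suffix $(g_\ell, \dots, g_t)$ would be a shorter path from $F_i(X)$ to $X_j$. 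Hence $\Delta_i(X_i, y_m) = 0$, so $v_i(X_i + y_m) = |X_i|$, while monotonicity gives $v_i(X_i + y_m + g_1) \ge v_i(X_i + g_1) = |X_i| + 1$; therefore $\Delta_i(X_i + y_m, g_1) = 1$, and by submodularity $\Delta_i(X_i - x_m + y_m, g_1) = 1$, which is exactly the required independence. Forward chords relative to $X_i + g_1$ are excluded by downward closure from chords relative to $X_i$. With these additions (equivalently: add a source node $s$ adjacent to all of $F_i(X)$ and run your argument on the shortest path $(s, g_1, \dots, g_t)$, so that ``no chord from $s$'' is part of the no-chord property), your plan goes through and coincides with the proofs in the works the paper cites.
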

This lemma is particularly useful when the path ends at some good in $X_0$; transferring goods along the path results in an allocation where no agent loses any utility (they lose a good they like, but steal a good they like to recover their utility), but one agent (agent $i$ in \Cref{lem:path-augmentation}) increases their utility by $1$. 
We say there is a path from some agent $i$ to some agent $j$ in an allocation $X$ if there is a path from $F_i(X)$ to $X_j$ in the exchange graph $\cal G(X)$. \citet[Theorem 3.8]{viswanathan2022yankee} establish a sufficient condition for a path to exist; we present it below.

\begin{lemma}[\citet{viswanathan2022yankee}]\label{lem:augmentation-sufficient}
Let $X$ and $Y$ be two non-redundant allocations. 
For any agent $i \in N + 0$ such that $|X_i| < |Y_i|$, there exists a path in $\cal G(X)$ from $F_i(X)$ to $X_j$ for some $j \in N+0$ such that $|X_j| > |Y_j|$.
\end{lemma}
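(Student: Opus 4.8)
The plan is to argue by contradiction via the reachable set in the exchange graph, in the style of augmenting-path proofs for matroid intersection. Throughout I would use the matroid structure of each $v_k$: a set $S$ is independent exactly when $v_k(S)=|S|$ (so non-redundancy makes every $X_k$ and $Y_k$ independent), and ``span'' and ``fundamental circuit'' refer to this matroid. Assume no path of the desired kind exists, and let $R \subseteq G$ be the set of goods reachable from $F_i(X)$ in $\mathcal{G}(X)$ (so $F_i(X) \subseteq R$). The assumption says precisely that every agent $j$ with $X_j \cap R \neq \emptyset$ satisfies $|X_j| \le |Y_j|$. The workhorse is the forward-closure of $R$: if $g \in X_k \cap R$ and $v_k(X_k - g + g') = v_k(X_k)$, then $g' \in R$; contrapositively, if $g' \notin R$ then no good of $X_k \cap R$ can be neutrally swapped for $g'$.

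I would then establish two per-agent matroid facts for each $k \in N+0$, writing $a_k = |X_k \cap R|$, $b_k = |X_k \setminus R|$, $c_k = |Y_k \cap R|$, $d_k = |Y_k \setminus R|$. (i) If $X_k \cap R \neq \emptyset$, then $F_k(X) \subseteq R$: a good $g' \in F_k(X)$ is independent of $X_k$, so $X_k - g + g'$ is independent for every $g \in X_k$; taking $g \in X_k \cap R$ yields an edge $g \to g'$, hence $g' \in R$ by closure. (ii) For such $k$, every $g' \in Y_k \setminus R$ lies in the span of $X_k \setminus R$: by (i) and $g' \notin R$ we have $g' \notin F_k(X)$, so $g'$ has a well-defined fundamental circuit with respect to $X_k$, and each circuit element $h \in X_k$ satisfies that $X_k - h + g'$ is independent, i.e. gives an edge $h \to g'$; closure forces $h \notin R$, so the circuit avoids $X_k \cap R$ and $g'$ is spanned by $X_k \setminus R$. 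Since $Y_k \setminus R$ is independent, summing gives $d_k \le b_k$.

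With these the count closes quickly. For any $k$ with $X_k \cap R \neq \emptyset$, combining $d_k \le b_k$ with the hypothesis $a_k + b_k = |X_k| \le |Y_k| = c_k + d_k$ gives $a_k \le c_k$; for $k$ with $X_k \cap R = \emptyset$ we have $a_k = 0 \le c_k$ trivially, so $a_k \le c_k$ for all $k$. Because both $X$ and $Y$ partition $R$, $\sum_k a_k = |R| = \sum_k c_k$, which forces $a_k = c_k$ for every $k$. Finally, since $|X_i| < |Y_i|$ with $X_i, Y_i$ independent, matroid augmentation produces a good of $Y_i$ that is an independent addition to $X_i$, i.e. a good in $Y_i \cap F_i(X) \subseteq R$; thus $c_i \ge 1$, so $a_i = c_i \ge 1$, agent $i$ is touched by $R$, and (ii) applies to $i$ giving $d_i \le b_i$. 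Then $|Y_i| = c_i + d_i = a_i + d_i \le a_i + b_i = |X_i|$, contradicting $|X_i| < |Y_i|$.

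The main obstacle is fact (ii), and underneath it fact (i). The naive hope that $Y_k \cap R$ is spanned by $X_k \cap R$ fails, because $Y_k$ may contain goods that are independent additions to $X_k$ (elements of $F_k(X)$) and these lie outside the span of $X_k$ altogether, so they cannot be controlled by any fundamental-circuit argument. The resolution is the closure observation (i): any agent actually touched by $R$ has all of its addable goods already pulled into $R$, which is exactly what lets the circuit argument govern the goods of $Y_k$ sitting outside $R$ and makes the global count balance. A secondary point requiring care is confirming that $i$ itself is touched by $R$; this should be derived from the augmentation fact at the end rather than assumed at the start.
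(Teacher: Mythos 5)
Your proof is correct, but note that there is no in-paper proof to compare it against: the paper imports \Cref{lem:augmentation-sufficient} wholesale, citing it as Theorem~3.8 of \citet{viswanathan2022yankee} (with a close variant in \citet{Barman2021MRFMaxmin}). What you have written is the classical augmenting-path argument from matroid union/intersection theory, which is exactly the style of argument underlying those cited proofs: assume no path exists, take the forward-closed reachable set $R$, and close a rank count. The key steps all check out. Fact (i) holds because $g' \in F_k(X)$ makes $X_k + g'$ independent, so every $g \in X_k$ has an edge to $g'$, and forward-closure pulls $g'$ into $R$. Fact (ii) holds because $g' \in Y_k \setminus R$ cannot lie in $F_k(X)$ (by (i), since $F_k(X) \subseteq R$), so the fundamental circuit $C(g', X_k)$ exists, each $h \in C(g', X_k) - g'$ gives an edge $h \to g'$ and hence lies outside $R$, placing $g'$ in the span of $X_k \setminus R$; as $Y_k \setminus R$ is an independent set inside a set of rank $|X_k \setminus R|$, this yields $d_k \le b_k$. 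The count ($a_k \le c_k$ for every $k$, equal sums $\sum_k a_k = \sum_k c_k = |R|$ over the two partitions of $R$, hence $a_k = c_k$), combined with the matroid exchange axiom producing a good in $Y_i \cap F_i(X) \subseteq R$ so that $c_i \ge 1$, then forces $|Y_i| = a_i + d_i \le a_i + b_i = |X_i|$, the desired contradiction. Two cosmetic points, neither a gap: in (ii) you should dispose of the trivial subcase $g' \in X_k$ separately (there $g'$ lies in $X_k \setminus R$ and is spanned by itself, but its fundamental circuit is undefined, so the stated argument does not literally apply); and it is worth remarking that agent $0$ participates in the count on equal footing because $v_0(S) = |S|$ is itself a matroid rank function, so the hypothesis ``every $j$ touched by $R$ has $|X_j| \le |Y_j|$'' and facts (i)--(ii) apply to $j = 0$ as well.
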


\subsection{Justice Criteria}
We define the {\em utility vector} of an allocation $X$ as $\vec u^X = (v_1(X_1), v_2(X_2), \dots, v_n(X_n))$. 
In general, a {\em justice criterion} (denoted by $\Psi$) is a way of comparing the utility vectors of two allocations $X$ and $Y$. We use $\vec u^X \succ_{\Psi} \vec u^Y$ to denote allocation $X$ being better than allocation $Y$ according to $\Psi$.
To ensure all allocations can be compared, we require $\succeq_{\Psi}$ be a total ordering on the set of all possible utility vectors $\mathbb{Z}^n_{\ge 0}$.

For example, if $\Psi$ is the Nash welfare justice criterion, 
$\vec u^X \succ_{\Psi} \vec u^Y$ if the product of agent utilities under $X$ is greater than the product of agent utilities under $Y$.
For readability, we abuse notation sometimes and replace $\vec u^X \succ_{\Psi} \vec u^Y$ with $X \succ_{\Psi} Y$.

Our goal is to compute an allocation with a \emph{maximal} utility vector with respect to $\Psi$. In other words, we would like to find an allocation $X$ such that for no other allocation $Y$, we have $Y \succ_{\Psi} X$. We sometimes refer to such an allocation as a $\Psi$ maximizing allocation.
In the above example, this would correspond to computing a max Nash welfare allocation.

\subsection{Important Definitions}
For ease of readability, we only define a few necessary terms from the fair division literature, and defer additional definitions to where they are used. 
\begin{definition}[Utilitarian Social Welfare]
The {\em utilitarian social welfare} of an allocation $X$ is given by $\sum_{i \in N} v_i(X_i)$. An allocation is referred to as \MAXUSW if it maximizes the utilitarian social welfare. 
\end{definition}
\begin{definition}[Lexicographic Domination]
Let $\vec x, \vec y \in \R^c$ be two vectors for some positive integer $c$. $\vec x$ is said to {\em lexicographically dominate} $\vec y$ if there exists a $k \in [c]$ such that for all $j \in [k-1]$, we have $\vec x_j = \vec y_j$ and we have $\vec x_k > \vec y_k$. A real valued vector $\vec x$ is {\em lexicographically dominating} with respect to a set of vectors $V$ if there exists no $\vec y \in V$ which lexicographically dominates $\vec x$.

This definition can be extended to allocations as well. An allocation $X$ is said to lexicographically dominate an allocation $Y$ if the utility vector of $X$ lexicographically dominates the utility vector of the $Y$. Similarly, an allocation $X$ is {\em lexicographically dominating} with respect to a set of allocations $\cal V$ if there exists no $Y \in \cal V$ which lexicographically dominates $X$.
\end{definition}
\begin{definition}[Pareto Dominance]
An allocation $X$ is said to {\em Pareto dominate} another allocation $Y$ if for all $h \in N$, we have $v_h(X_h) \ge v_h(Y_h)$ with a strict inequality for at least one $h \in N$. 
\end{definition}

\section{General Yankee Swap}
Our algorithmic fair allocation framework generalizes the Yankee Swap algorithm by \citet{viswanathan2022yankee}. 
In the Yankee Swap algorithm, all goods are initially unallocated. 
At every round, the agent with the least utility picks a good they like from the unallocated pile or initiate a {\em transfer path} where they steal a good they like from another agent, who then steals a good they like from another agent and so on until an agent finally takes a good they like from the set of unallocated goods. 
These transfer paths are equivalent to shortest paths on the exchange graph and can be computed easily.
If there is no such path to the pool of unallocated goods, the agent is removed from the game (denoted by their removal from the set $U$). 
We terminate once all agents are no longer playing.
 
We now present \emph{General Yankee Swap} (\Cref{algo:weighted-yankee-swap}). 
Surprisingly enough, we only change one line in the algorithm's pseudocode: instead of picking the least utility agent at every round, we pick an agent that maximizes a general \emph{gain function} $\phi$. The gain function $\phi$ takes as input the utility vector of an allocation (the partial allocation we have so far) and an index $i \in N$; its output is a $b$-dimensional vector. 
When $b > 1$, $\phi(\vec u^X, i) > \phi(\vec u^X, j)$ if $\phi(\vec u^X, i)$ lexicographically dominates $\phi(\vec u^X, j)$. 
If multiple agents maximize the gain function $\phi$, we break ties by choosing the agent with the least index.
For ease of readability, we sometimes replace $\phi(\vec u^X, i)$ with $\phi(X, i)$.

The gain function $\phi$ depends on the justice criterion we maximize (\Cref{table:results-summary} summarizes the main condition for the gain function, elaborated upon in \Cref{sec:applications}).
The original Yankee Swap is a specific case of the general Yankee Swap with $\phi(X, i) = -v_i(X_i)$.


\begin{algorithm}[t]
    \caption{General Yankee Swap}
    \label{algo:weighted-yankee-swap}
    \SetAlgoLined
    \DontPrintSemicolon
    $X = (X_0, X_1, \dots, X_n) \gets (G, \emptyset, \dots, \emptyset)$\; \tcp{All items initially in $X_0$, i.e. unassigned.}
    $U \gets N$\;
    \While{$U \ne \emptyset$}{
        $S \gets \argmax_{k \in U} \phi(X, k)$\; 
        \tcp{Choose the agent who maximizes $\phi$}
        $i \gets \min \{j : j \in S\}$\; 
        \tcp{Break ties using index}
        Find the shortest path in the exchange graph $\cal G(X)$ from $F_i(X)$ to $X_0$\;
        \uIf{a path $P = (g_{i_1}, g_{i_2}, \dots, g_{i_k})$ exists}{
            \tcp{Update $X$ using path augmentation}
            $X_k \gets X_k \Lambda P$ for all $k \in N - i$\; 
            $X_i \gets X_i \Lambda P + g_{i_1}$\;
            $X_0 \gets X_0 \Lambda P$\;
        }
        \Else{
            $U\gets U - i$\;
            \tcp{If no path exists, remove $i$ from $U$}
        }
    }
    \Return $X$\;
\end{algorithm}

\subsection{Sufficient Conditions for General Yankee Swap}
General Yankee Swap works when the justice criterion $\Psi$ has the following properties. These properties have been defined for arbitrary vectors $\vec x, \vec y$ and $\vec z$ but it may help to think of these vectors as utility vectors.
\begin{description}
    \item[(C1) --- Pareto Dominance:] For any two vectors $\vec x, \vec y \in \Z^n_{\ge 0}$, if $x_h \ge y_h$ for all $h \in N$, then $\vec x \succeq_{\Psi} \vec y$. Equality holds if and only if $\vec x = \vec y$.
    \item[(C2) --- Gain Function:] $\Psi$ admits a gain function $\phi$ that maps each possible utility vector to a real-valued $b$-dimensional vector with the following properties:
    \begin{description}
        \item[(G1)] For any vector $\vec x \in \Z^n_{\ge 0}$ and any $i, j \in [n]$, Let $\vec y \in \Z^n_{\ge 0}$ be the vector that results from starting at $\vec x$ and adding $1$ to $x_i$. 
        Similarly, let $\vec z \in \Z^n_{\ge 0}$ be the vector resulting from starting at $\vec x$ and adding $1$ to $x_j$. Then, if $\phi(\vec x, i) \ge \phi(\vec x, j)$, $\vec y \succeq_{\Psi} \vec z$. Equality holds if and only if $\phi(\vec x, i) = \phi(\vec x, j)$.
        \item[(G2)] For any two vectors $\vec x, \vec y \in \Z^n_{\ge 0}$ and $i \in [n]$, if $x_i \le y_i$, then $\phi(\vec x, i) \ge \phi(\vec y, i)$ with equality holding if $x_i = y_i$.
    \end{description}
\end{description}
Intuitively, $\phi(X, i)$ can be thought of as a function describing the marginal `gain' of giving a good to agent $i$ given some allocation $X$. The higher the value of $\phi(X, i)$, the more valuable it is to give an item to $i$. 
The condition (G1) states that if giving the item $g$ to agent $i$ results in a better allocation according to the justice criterion $\Psi$, then this should be reflected in the gain function $\phi$. The condition (G2) states that the gain function should take an egalitarian approach, assigning a greater $\phi$ value to $i$ when their utility is lower. 


\subsection{Analysis}
General Yankee Swap outputs a non-redundant $\Psi$ maximizing allocation which is also utilitarian welfare maximizing. Moreover, among all allocations that maximize $\Psi$, the output of General Yankee Swap is lexicographically dominating. 
This is a stronger statement and is required to show strategyproofness --- a technique used by \citet{halpern2020binaryadditive} and \citet{Suksumpong2022weightednash}.
Our proof is subtly different from that of \citet{viswanathan2022yankee}; while they heavily rely on the output allocation being leximin, our proof carefully uses path augmentations in the analysis to show correctness for any valid $\Psi$.
We first argue that the output of \Cref{algo:weighted-yankee-swap} is non-redundant, i.e. $v_i(X_i) = |X_i|$ for all $i \in N$.
\begin{lemma}\label{lem:general-yankee-nonredundant}
At any iteration of
\Cref{algo:weighted-yankee-swap}, the allocation $X$ maintained by the algorithm is non-redundant, i.e. $v_i(X_i) = |X_i|$ for all $i \in N$.
\end{lemma}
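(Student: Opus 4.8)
The plan is to prove the statement by induction on the number of iterations of \Cref{algo:weighted-yankee-swap}, using \Cref{lem:path-augmentation} as the engine of the inductive step. The invariant I want to maintain is that the allocation $X$ held by the algorithm is non-redundant at the start (and hence throughout) every iteration of the while loop.

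For the base case, I would observe that the initial allocation sets $X_i = \emptyset$ for every $i \in N$, so $v_i(X_i) = v_i(\emptyset) = 0 = |X_i|$ by the first defining property of a matroid rank function. Since $X_0 = G$ and $v_0(S) = |S|$ by the paper's convention, the initial allocation is non-redundant for agent $0$ as well; and as noted in the preliminaries, non-redundancy for the agents in $N$ already implies non-redundancy for $N + 0$, so it suffices to track the agents in $N$.

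For the inductive step, I would assume $X$ is non-redundant at the start of an iteration and examine the two branches of the algorithm. If no augmenting path from $F_i(X)$ to $X_0$ exists, the only change is the removal of $i$ from $U$; the allocation $X$ is untouched and thus remains non-redundant. If a path $P = (g_{i_1}, \dots, g_{i_k})$ exists, I would note that the update performed by the algorithm --- namely $X_k \gets X_k \Lambda P$ for $k \in N - i$, $X_i \gets X_i \Lambda P + g_{i_1}$, and $X_0 \gets X_0 \Lambda P$ --- is exactly the allocation $Y$ of \Cref{lem:path-augmentation} instantiated with $j = 0$, since the shortest path from $F_i(X)$ to $X_0$ terminates at a good owned by agent $0$, and $i \neq 0$ because $i$ is drawn from $U \subseteq N$. \Cref{lem:path-augmentation} then guarantees directly that the resulting allocation is non-redundant, which is precisely the allocation at the start of the next iteration, closing the induction.

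I do not anticipate a substantive obstacle here; the statement is essentially a bookkeeping consequence of \Cref{lem:path-augmentation}. The only care needed is in verifying that the algorithm's path-augmentation updates coincide line-for-line with the allocation $Y$ in \Cref{lem:path-augmentation} (in particular that $g_{i_1} \in F_i(X)$ is the good added to $i$'s bundle and that the tail of the path lies in $X_0$), and in confirming that the hypotheses of that lemma --- a non-redundant starting allocation together with a genuine shortest path from $F_i(X)$ to $X_j$ for some $j \neq i$ --- are met at each step. Both follow from the inductive hypothesis and from the fact that agent $0$ is a legitimate endpoint distinct from the chosen agent $i$.
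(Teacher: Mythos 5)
Your proof is correct and follows essentially the same route as the paper's: induction on iterations, with the base case being the trivially non-redundant empty allocation and the inductive step invoking \Cref{lem:path-augmentation} to conclude that path augmentation preserves non-redundancy. Your version is simply more explicit in checking that the algorithm's update matches the allocation $Y$ of that lemma with $j = 0$, a detail the paper leaves implicit.
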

\begin{proof}
This claim holds via an inductive argument on the iterations of \Cref{algo:weighted-yankee-swap}. Let $X^t$ be the allocation at the $t$-th iteration. At iteration $t = 0$, no agent holds any item, thus the allocation is trivially non-redundant. 
At every other iteration, the only way we modify the allocation is using path augmentations. 
From Lemma \ref{lem:path-augmentation}, the new allocation after path augmentation is non-redundant as well. 
Therefore, $X^t$ must be non-redundant for any $t$.
\end{proof}

Let us next prove our main claim. 
\begin{theorem}\label{thm:weighted-yankee-leximin}
Let $\Psi$ be a justice criterion that satisfies (C1) and (C2) with a gain function $\phi$. 
When agents have matroid rank valuations, General Yankee Swap with input $\phi$ maximizes $\Psi$. 
Moreover, the output of General Yankee Swap lexicographically dominates all other $\Psi$-maximizing allocations.
\end{theorem}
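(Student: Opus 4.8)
The plan is to prove a single strengthened statement: the output $X$ attains the unique maximum of the combined order $\triangleright$ that refines $\succeq_\Psi$ by lexicographic domination, where $\vec u\triangleright\vec v$ means either $\vec u\succ_\Psi\vec v$, or $\vec u$ and $\vec v$ are $\Psi$-indifferent (i.e.\ $\vec u\succeq_\Psi\vec v$ and $\vec v\succeq_\Psi\vec u$) while $\vec u$ lexicographically dominates $\vec v$. Since $\succeq_\Psi$ is a total order and lexicographic domination breaks all remaining ties, $\triangleright$ is a total order on the finitely many achievable utility vectors; let $\vec u^*$ be its maximum and $Z^*$ a non-redundant allocation with $\vec u^{Z^*}=\vec u^*$ (it exists by \Cref{lem:non-redundant-wlog}). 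This maximum is in particular $\succeq_\Psi$-maximal and, among $\Psi$-maximizing vectors, lexicographically dominating, so it suffices to show that General Yankee Swap outputs $\vec u^X=\vec u^*$.

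Two structural facts come first. By \Cref{lem:general-yankee-nonredundant} every intermediate allocation $X^t$ is non-redundant, so $|X^t_i|=v_i(X^t_i)$ throughout; and every iteration either removes an agent or augments along a path from $F_i(X^t)$ to $X^t_0$, which by \Cref{lem:path-augmentation} raises $v_i$ by one and leaves all other agents of $N$ untouched, so utilities of agents in $N$ are non-decreasing over the run. Using these, I would establish a \emph{persistence} lemma: if at some iteration there is no path from $F_j(X)$ to $X_0$, then there is none at any later iteration (so a removed agent stays removed and the final allocation has no augmenting path to $X_0$ from any agent). The proof is by contradiction --- a later path would make $\vec u+e_j$ achievable, and feeding that allocation together with the allocation at $j$'s removal into \Cref{lem:augmentation-sufficient} yields an endpoint $j''$ whose removal-time utility strictly exceeds its later utility; non-decreasing utilities rule out every $j''\in N$, forcing $j''=0$ and hence a path to $X_0$ already at removal time, the desired contradiction. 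The corollary I will use repeatedly is that whenever an agent currently has a path to $X_0$, it must still be in play.

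The heart of the argument is the invariant $\vec u^t\le\vec u^*$ (coordinatewise on $N$), proved by induction on $t$. The only way to break it is an \emph{overshoot}, where the algorithm increments some agent $i$ with $u^t_i=u^*_i$. If $\vec u^t=\vec u^*$ this is immediate: the augmenting path makes $\vec u^*+e_i$ achievable, and (C1) gives $\vec u^*+e_i\succ_\Psi\vec u^*$, contradicting $\Psi$-optimality. Otherwise there is slack, which I extract concretely. Applying \Cref{lem:augmentation-sufficient} to $Z^*$ and the allocation obtained by incrementing $i$ produces a path in $\cal G(Z^*)$ from $F_i(Z^*)$ to some $Z^*_j$ with $u^*_j>u^t_j$, where a counting check on totals (using the invariant to get $\sum_N\vec u^t<\sum_N\vec u^*$) forces the endpoint to be a genuine agent rather than $0$; \Cref{lem:path-augmentation} then certifies that the swapped vector $\vec w=\vec u^*+e_i-e_j$ is achievable. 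A second application of \Cref{lem:augmentation-sufficient}, now to $X^t$ and $Z^*$ at the slack agent $j$, produces a path from $F_j(X^t)$ to $X^t_0$ (the invariant forces this endpoint to be $0$), so by persistence $j$ is in play at this step, and the selection rule gives $\phi(X^t,i)\ge\phi(X^t,j)$.

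Finally I compare $\vec w$ with $\vec u^*$ through the gain conditions. With common base $\vec x=\vec u^*-e_j$, condition (G2) gives $\phi(\vec x,i)=\phi(X^t,i)$ (equal $i$-coordinates) and $\phi(\vec x,j)\le\phi(X^t,j)$ (since $u^*_j-1\ge u^t_j$), so $\phi(\vec x,i)\ge\phi(\vec x,j)$, and (G1) yields $\vec w\succeq_\Psi\vec u^*$. If this is strict, $\vec w\succ_\Psi\vec u^*$ contradicts $\Psi$-optimality of $\vec u^*$; if it is an equality, (G1) forces $\phi(X^t,i)=\phi(X^t,j)$, so $j$ also maximizes $\phi$ and the least-index tie-break gives $i<j$, whence $\vec w$ lexicographically dominates $\vec u^*$ while being $\Psi$-indifferent to it, i.e.\ $\vec w\triangleright\vec u^*$, contradicting maximality of $\vec u^*$. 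This rules out overshoot and establishes $\vec u^X\le\vec u^*$; were it strict, the slack agent would, via \Cref{lem:augmentation-sufficient} and the invariant, have a path to $X_0$ in the final allocation, contradicting persistence, so $\vec u^X=\vec u^*$, which proves both claims. The main obstacle, and the step needing the most care, is the overshoot case: orchestrating the two applications of \Cref{lem:augmentation-sufficient} so that the swap $\vec w$ is simultaneously achievable and witnessed by an in-play agent, and then discharging the $\Psi$-indifferent subcase purely through the least-index tie-breaking rule.
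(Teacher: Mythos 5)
Your proposal is correct, but it reaches the theorem by a genuinely different architecture than the paper. The paper argues backwards from the output $X$: it fixes a $\Psi$-maximizing, lexicographically dominating allocation $Y$, picks the agent $i$ with highest $\phi(X,i)$ such that $|X_i|<|Y_i|$, rewinds to the allocation $W$ at the iteration where $i$ was removed, and proves the inner claim \Cref{lem:leximin-dominance} ($|Y_h|\ge|W_h|$ for all $h$) by a contradiction that needs a third time point --- the iteration $W'$ at which the over-endowed witness $j$ last gained an item --- and a chain of $\phi$-inequalities across $Y$, $W'$, $W$ and $X$; the final blow is that \Cref{lem:augmentation-sufficient} then produces a path from $F_i(W)$ to $W_0$ at the very iteration where $i$ was removed, so no persistence statement is ever needed. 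You instead run a forward induction with the invariant $\vec u^t\le\vec u^*$ against the $\triangleright$-maximum, and this forces you to prove the persistence lemma (an agent that loses its path to $X_0$ never regains one), since your contradiction in the overshoot case lives at the current iteration rather than at a removal iteration; your persistence proof via \Cref{lem:augmentation-sufficient} and monotonicity of utilities is sound, and it is what legitimately certifies that the slack agent $j$ is still in $U$ so the selection rule applies. The core technical move is shared between the two proofs --- build the swap vector $\vec u^*+e_i-e_j$ by augmenting in the optimum's exchange graph (\Cref{lem:path-augmentation}), then use (G2), (G1), the selection rule, and the least-index tie-break to contradict either $\Psi$-optimality or lexicographic maximality --- and your execution of it is correct, including the counting argument that rules out the endpoint $0$ and the reduction of the $\Psi$-indifferent subcase to $i<j$. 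What your route buys: a time-local argument (each inductive step compares only the current allocation with $Z^*$), a reusable persistence lemma, and the stronger intermediate fact that every intermediate allocation sits coordinatewise below the optimum. What the paper's route buys: it dispenses with persistence entirely and concentrates all history-dependence into one inner lemma, at the cost of more intricate bookkeeping across past iterations. Both arguments deliver exactly the claimed statement, including the lexicographic-dominance refinement.
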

\begin{proof}
It is easy to show that Algorithm \ref{algo:weighted-yankee-swap} always terminates: at every iteration, we either reduce the number of unallocated goods or remove some agent from $U$ while not changing the number of unallocated goods. 
Let $X$ be the non-redundant allocation output by Algorithm \ref{algo:weighted-yankee-swap}. 

Let $Y$ be a non-redundant allocation that maximizes $\Psi$ --- such an allocation is guaranteed to exist since $\Psi$ is a total order over all possible utility vectors. We can assume $Y$ is non-redundant thanks to Lemma \ref{lem:non-redundant-wlog}. 
If there are multiple such $Y$, pick one that lexicographically dominates all others --- that is, pick one that breaks ties in favor of lower index agents. 

If for all $h \in N$, $|X_h| \ge |Y_h|$, then $X$ maximizes $\Psi$ (since $\Psi$ respects Pareto dominance according to C1) and is lexicographically dominating --- we are done. Assume for contradiction that this does not hold.

This means that there is some agent $i$ whose utility under $X$ is strictly lower than under $Y$, i.e. $|X_i|< |Y_i|$. 
Let $i \in N$ be the agent with highest $\phi(X, i)$ in $X$ such that $|X_i| < |Y_i|$; if there are multiple agents we break ties in favor of the lowest index agent.
Let $W$ be the non-redundant allocation maintained by General Yankee Swap at the start of the iteration where $i$ was removed from $U$. From this stage onward, $i$'s utility will no longer increase. In addition, in order to be removed, $i$ must have the highest value of $\phi(W,i)$ among all agents in $U$.   
We use $t$ to denote this iteration. We have the following lemma. 
\begin{lemma}\label{lem:leximin-dominance}
For all $h \in N$, $|Y_h| \ge |W_h|$.
\end{lemma}
\begin{proof}
Assume for contradiction that this is not true.
Let $j \in N$ be the agent with highest $\phi(Y, j)$ such that $|Y_j| < |W_j|$; if there are multiple, break ties in favor of the one with the least index.

Consider the bundle $W_j$. 
Let $W'$ be the allocation at the start of the iteration when $j$ moved from a bundle of size $|W_j| - 1$ to $|W_j|$. That is, 
$j$ was the agent with maximum $\phi(W', j)$ within $U$, and executed a transfer path which resulted in them receiving an additional item with a marginal gain of $1$. 
Combining this with the fact that bundle sizes for each agent monotonically increase and that agents never return to $U$ once removed, we have 
$\phi(W',j) \ge \phi(W',i)$. Furthermore, since $\phi$ decreases with agent utilities (G2), $\phi(Y,j) \ge \phi(W',j)$ and $\phi(W',i)\ge \phi(W,i)$. Therefore:
$$\phi(Y, j) \ge \phi(W', j) \ge \phi(W', i) \ge \phi(W, i).$$ 
If equality holds throughout and $\phi(Y, j) = \phi(W', j) = \phi(W', i) = \phi(W, i)$, then $j < i$ since General Yankee Swap breaks final ties using the index of the agent and $j$ was chosen at $W'$ instead of $i$.
Therefore, we have 
\begin{align}
\phi(Y, j) \ge \phi(W, i) \text{. If equality holds, then }j < i. \label{obs:yjlesswj}
\end{align}

Invoking Lemma \ref{lem:augmentation-sufficient} with allocations $Y$ and $W$ and agent $j$, we can improve $j$'s utility under $Y$ via some transfer path that ends in some agent $k \in N+0$ for whom $v_k(Y_k)> v_k(W_k)$; that is, there must be a path from the items that offer $j$ a marginal gain of $1$ under $Y$ (the set $F_j(Y)$) to $Y_k$ for some $k \in N+0$ where $|Y_k| > |W_k|$ in the exchange graph of $Y$.  
Invoking Lemma \ref{lem:path-augmentation}, transferring goods along the shortest path from $F_j(Y)$ to $Y_k$ results in a non-redundant allocation $Z$ where $|Z_j| = |Y_j| + 1$, $|Z_k| = |Y_k| - 1$, and all other agents' utilities are the same.

If $k = 0$ --- i.e. $j$ executed a transfer path that ended with an unassigned item --- we are done since $Z$ strictly Pareto dominates $Y$. Using (C1), $Z \succ_{\Psi} Y$, contradicting our assumption on the $\Psi$ optimality of $Y$. Therefore, it must be that $k\ne 0$.

Consider $\phi(W, k)$. 
If $\phi(W, k) > \phi(W, i)$, since $i$ was chosen as the agent with highest $\phi(W, i)$ among the agents in $U$ at iteration $t$, we must have that $k\notin U$ at iteration $t$. 
This gives us $|X_k| = |W_k| < |Y_k|$. 
Combining this with our initial assumption that $\phi(W, k) > \phi(W, i)$, we get $\phi(X, k) =  \phi(W, k) > \phi(W, i) \ge \phi(X, i)$ using (G2). 
This contradicts our choice of $i$; $i$ is not the agent with highest $\phi(X, i)$ such that $|X_i| < |Y_i|$. 
This leads us to the following observation (combined with \Cref{obs:yjlesswj}).
\begin{align}
\phi(W, k) \le \phi(W, i) \le \phi(Y, j) \label{obs:kgreaterij}
\end{align}

Let $Y'$ be a non-redundant allocation that results from starting at $Y$ and removing any good from $k$. Note that, to show $Z \succeq_{\Psi} Y$, it suffices to show that $\phi(Y', j) \ge \phi(Y', k)$ (using (G1)).
From \Cref{obs:kgreaterij} and (G2), we have $\phi(Y', k) \le \phi(W, k) \le \phi(W, i) \le \phi(Y, j) = \phi(Y', j)$. 
If any of these inequalities are strict, we have $\phi(Y', j) > \phi(Y', k)$ which implies $Z \succ_{\Psi} Y$, a contradiction since $Y$ is $\Psi$-optimal.
This gives us the following observation. 
\begin{align}
    \phi(Y', k) = \phi(W, k) = \phi(W, i) = \phi(Y, j) = \phi(Y', j) \label{obs:long-equality}
\end{align}
Since $\phi(W, k) = \phi(W, i)$ and the algorithm picked $i$ at iteration $t$, we must have $i \le k$. 
Assume for contradiction that this is not true. 
If $k\in U$ at iteration $t$, the algorithm would have picked $k$ instead of $i$ --- a contradiction. 
If $k\notin U$, then $\phi(X, k) = \phi(W, k) = \phi(W, i) \ge \phi(X, i)$ using (G2).
Combining this with $i > k$ contradicts our choice of $i$. Therefore, $i \le k$. Combined with \Cref{obs:yjlesswj}, we have:
\begin{align}
j < i \le k \label{obs:equal-weights}
\end{align}
Combining \Cref{obs:long-equality} and (G1), we get that $Z$ maximizes $\Psi$.
However, since $j < k$, $Z$ lexicographically dominates $Y$: 
all agents $h < j$ receive the same value in both allocations and $v_j(Z_j) > v_j(Y_j)$. 
This contradicts our assumption on $Y$.
\end{proof}

We have $|W_i| < |Y_i|$ by construction and $|W_h| \le |Y_h|$ for all $h \in N$ thanks to Lemma \ref{lem:leximin-dominance}. Therefore, from Lemma \ref{lem:augmentation-sufficient}, there must exist a path from $F_i(W)$ to $W_0$ in the exchange graph of $W$. This is a contradiction since we chose the iteration where $i$ was removed from $U$ implying that there is no path from $F_i(W)$ to $W_0$ in the exchange graph of $W$. 
\end{proof}

Next, we show that the output of the General Yankee Swap is always \MAXUSW.

\begin{prop}\label{prop:yankee-swap-usw}
For any input gain function $\phi$, the output of General Yankee Swap is \MAXUSW.
\end{prop}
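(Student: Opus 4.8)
The plan is to show that the output allocation $X$ of \Cref{algo:weighted-yankee-swap} is \MAXUSW by mirroring the closing argument of \Cref{thm:weighted-yankee-leximin}. By \Cref{lem:general-yankee-nonredundant}, $X$ is non-redundant, so the utilitarian social welfare is exactly $\sum_{h \in N}|X_h|$. Suppose for contradiction that $X$ is not \MAXUSW. Using \Cref{lem:non-redundant-wlog} I may pick a \emph{non-redundant} allocation $Y$ with strictly larger welfare, so $\sum_{h\in N}|Y_h| > \sum_{h\in N}|X_h|$ and hence some agent $i \in N$ is \emph{deficient}, i.e.\ $|X_i| < |Y_i|$. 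The target is to locate the iteration $t$ at which this deficient $i$ was removed from $U$, let $W$ be the (non-redundant) allocation maintained at that point, and then apply \Cref{lem:augmentation-sufficient} to $W$ and $Y$ with agent $i$ to produce a path from $F_i(W)$ to $W_0$ --- which contradicts the fact that $i$ was removed precisely because no such path existed.

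The main obstacle is that \Cref{lem:augmentation-sufficient} only guarantees a path from $F_i(W)$ to \emph{some} surplus bundle $W_j$ with $|W_j|>|Y_j|$, where $j$ could be a real agent rather than the unallocated pool (agent $0$); only a path ending at $W_0$ contradicts the removal of $i$. I would resolve this by choosing $Y$ carefully: among all non-redundant allocations with welfare strictly exceeding that of $X$, pick one that minimizes the ``real-agent surplus of $X$ over $Y$'', namely $\sigma(Y) = \sum_{h \in N}\max\{0,\,|X_h| - |Y_h|\}$. A short descent argument shows the minimizer satisfies $\sigma(Y)=0$: if some $k \in N$ had $|Y_k| < |X_k|$, then \Cref{lem:augmentation-sufficient} applied to $Y$ and $X$ (agent $k$) yields a path in $\cal G(Y)$ from $F_k(Y)$ to some $Y_{j'}$ with $|Y_{j'}|>|X_{j'}|$; augmenting $Y$ along it via \Cref{lem:path-augmentation} gives a non-redundant $Y'$ of welfare at least that of $Y$ (still exceeding $X$'s) but with $\sigma(Y') = \sigma(Y)-1$, contradicting minimality. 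Thus $\sigma(Y)=0$, i.e.\ $|X_h| \le |Y_h|$ for every $h \in N$.

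With this choice of $Y$ in hand, the argument closes cleanly. Because bundle sizes are non-decreasing across iterations (each successful augmentation raises the source's size by one and leaves all other agents' sizes unchanged, while a removal changes nothing) and because a removed agent never re-enters $U$ and can never again serve as the source of a transfer path, the deficient agent $i$ has $|W_i| = |X_i|$ and every agent satisfies $|W_h| \le |X_h| \le |Y_h|$, with $|W_i| = |X_i| < |Y_i|$ strict. Invoking \Cref{lem:augmentation-sufficient} on $W$ and $Y$ with agent $i$ then produces a path in $\cal G(W)$ from $F_i(W)$ to some $W_j$ with $|W_j|>|Y_j|$; since no real agent $h \in N$ can have $|W_h|>|Y_h|$, we are forced to have $j=0$, i.e.\ a path from $F_i(W)$ to $W_0$. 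This contradicts the removal of $i$ at iteration $t$ and completes the proof. I expect the descent step establishing $\sigma(Y)=0$ to be the only delicate part; the remainder is a direct adaptation of the monotonicity and path-augmentation bookkeeping already used in \Cref{thm:weighted-yankee-leximin}.
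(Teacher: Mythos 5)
Your proof is correct and takes essentially the same route as the paper's: both arguments pick an extremal non-redundant counterexample $Y$, use \Cref{lem:augmentation-sufficient} together with \Cref{lem:path-augmentation} to augment away any agent $k$ with $|Y_k| < |X_k|$, and obtain a contradiction at the iteration where an agent $i$ with $|X_i| < |Y_i|$ was removed from $U$. The only difference is organizational: the paper minimizes $\sum_{h \in N}|v_h(X_h) - v_h(Y_h)|$ over \MAXUSW allocations and derives its final contradiction from that minimality, whereas you minimize the one-sided surplus $\sigma(Y)$ to first normalize $Y$ so that $|X_h| \le |Y_h|$ for all $h \in N$ and then close with the removal-iteration contradiction.
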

\begin{proof}
Let $X$ be the non-redundant allocation output by General Yankee Swap. Assume for contradiction that $X$ is not \MAXUSW.
Let $Y$ be a non-redundant \MAXUSW allocation which minimizes $\sum_{h \in N}|v_h(X_h) - v_h(Y_h)|$. 

There must be at least one agent $i$ such that $|X_i| < |Y_i|$. 
Consider the allocation $W$ at the start of the iteration where $i$ was removed from $U$. Since $|W_i| < |Y_i|$ and $|W_h| \le |Y_h|$ for all $h \in N$, using Lemma \ref{lem:augmentation-sufficient}, there exists a path from $F_i(W)$ to $W_0$ in $\cal G(W)$. This contradicts our choice of iteration since we chose the iteration where $i$ was removed from $U$.

Therefore, we must have at least one agent $j$ such that $|Y_j| < |X_j|$. Applying Lemma \ref{lem:augmentation-sufficient} with allocations $X$, $Y$ and the agent $j$, we get that there is a path from $j$ to some agent $k \in N+0$ in the exchange graph of $Y$ such that $|X_k| < |Y_k|$. Transferring goods along the shortest path from $j$ to $k$, using Lemma \ref{lem:path-augmentation}, leads to a non-redundant allocation $Z$ where $|Z_j| = |Y_j| + 1$ and $|Z_k| = |Y_k| - 1$ and all other agents receive the same utility. 
If $k = 0$, $Z$ has a higher USW than $Y$ contradicting our assumption on $Y$.

If $k \ne 0$, then $\sum_{h \in N}|v_h(X_h) - v_h(Y_h)| > \sum_{h \in N}|v_h(X_h) - v_h(Z_h)|$ and $Z$ is \MAXUSW; again contradicting our assumption on $Y$.
Therefore, $X$ is \MAXUSW.
\end{proof}

\section{Strategyproofness}\label{sec:strategyproofness}
In this section, we show that if preferences are elicited {\em before} running the General Yankee Swap, being truthful is the dominant strategy. 

A mechanism is said to be {\em strategyproof} if no agent can get a better outcome by misreporting their valuation function. 
We define the steps of the Yankee Swap Mechanism as follows:
\begin{enumerate}
    \item Elicit the valuation function $v_i$ of each agent $i \in N$. 
    If an agent's valuation function is not an MRF, set the agent's valuation of every bundle to $0$.
    \item Use General Yankee Swap to compute a non-redundant allocation that maximizes some valid $\Psi$ for the valuation profile $\{v_i\}_{i \in N}$.
\end{enumerate}
Before we show the final result, we prove some useful lemmas. Our proof uses the same ideas as the strategyproofness result in \citet[Theorem 5]{Babaioff2021Dichotomous}.
Given a set $T \subseteq G$, we define the function $f_T: 2^G \rightarrow \R^+$ as $f_T(S) = |S \cap T|$. Note that for any $T$, $f_T$ is an MRF.
\begin{lemma}\label{lem:faithfulness}
Let $X$ be the output allocation of the Yankee Swap mechanism with valid input gain function $\phi$ and valuation profile $\{v_i\}_{i \in N}$. 
For some agent $i \in N$, replace $v_i$ with some $f_T$ such that $T \subseteq X_i$ and run the mechanism again to get an allocation $Y$. We must have $Y_i = T$.
\end{lemma}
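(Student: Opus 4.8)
The plan is to establish the two inclusions $Y_i \subseteq T$ and $T \subseteq Y_i$ separately. The first is immediate. Since $f_T$ is itself an MRF, the mechanism does not zero it out, and the allocation $Y$ is the output of General Yankee Swap on the reported profile; hence by \Cref{lem:general-yankee-nonredundant} it is non-redundant. As agent $i$ now reports $f_T$, non-redundancy gives $|Y_i| = f_T(Y_i) = |Y_i \cap T|$, which forces $Y_i \subseteq T$. The whole difficulty therefore lies in the reverse inclusion, equivalently in showing that agent $i$ attains its maximal reported value $f_T(Y_i) = |T|$: since $f_T(S) \le |T|$ for every $S$, this together with $Y_i \subseteq T$ yields $Y_i = T$.

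For the reverse inclusion I would argue by contradiction, assuming $|Y_i| < |T|$. The key auxiliary object is the allocation $X'$ that gives agent $i$ exactly $T$ and leaves every other agent with their original bundle, i.e. $X'_i = T$ and $X'_h = X_h$ for $h \ne i$. Because $T \subseteq X_i$ and $X_i$ is independent under $v_i$ (non-redundancy of $X$ from the first run), $T$ is independent, so $f_T(T) = |T|$; hence $X'$ is non-redundant with respect to the new profile. Now $Y$ and $X'$ are both non-redundant in the new instance with $|Y_i| < |X'_i| = |T|$, so \Cref{lem:augmentation-sufficient} yields a path in $\mathcal{G}(Y)$ from $F_i(Y)$ to $Y_k$ for some $k \in N+0$ with $|Y_k| > |X'_k|$. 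Augmenting along the shortest such path via \Cref{lem:path-augmentation} produces a non-redundant allocation $Z$ in which agent $i$'s value rises by one and, if $k \ne 0$, agent $k$'s value falls by one, with all other agents unchanged.

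If $k = 0$, the augmentation strictly Pareto-improves $Y$ (agent $i$ gains, nobody loses), so by (C1) $Z \succ_{\Psi} Y$, contradicting the $\Psi$-maximality of the output guaranteed by \Cref{thm:weighted-yankee-leximin}. The substantive case, which I expect to be the main obstacle, is $k \ne 0$: here agent $k$ holds strictly more value in $Y$ than in $X$, which can genuinely happen because once agent $i$ stops valuing the goods in $X_i \setminus T$ another agent may legitimately poach a good of $T$ and build value around it, so one cannot simply force the augmenting path to terminate at the unallocated pool. To rule this out I would reuse the gain-function bookkeeping of \Cref{lem:leximin-dominance}: let $Y''$ be $Y$ with one good removed from $k$, so that both $Y$ and $Z$ arise from $Y''$ by adding a good to $k$ and to $i$ respectively, and aim to show $\phi(Y'', i) \ge \phi(Y'', k)$. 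This would be obtained by chaining (G2) monotonicity inequalities across the iterations of the new run --- comparing the iteration at which $k$ last gained a good (where $k$ maximized $\phi$ among the active set $U$) with the iteration at which $i$ was frozen, and using the index tie-break exactly as the $\phi$-values are propagated in the proof of \Cref{thm:weighted-yankee-leximin}. Once $\phi(Y'', i) \ge \phi(Y'', k)$ is in hand, condition (G1) gives $Z \succeq_{\Psi} Y$; a strict inequality contradicts $\Psi$-maximality, while equality (together with the induced index comparison $i < k$) forces $Z$ to lexicographically dominate $Y$, contradicting the lexicographic dominance of the output also asserted in \Cref{thm:weighted-yankee-leximin}. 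Either way the optimality of $Y$ is contradicted, so $|Y_i| = |T|$ and $Y_i = T$.
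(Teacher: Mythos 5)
Your first inclusion, the construction of $X'$ (which is exactly the paper's comparison allocation $Z$), and the $k=0$ case all match the paper. The gap is in the case $k \ne 0$, and it is not a detail that the sketched bookkeeping can fill: the inequality $\phi(Y'', i) \ge \phi(Y'', k)$ you need comes out in the \emph{wrong direction} in precisely the problematic configuration. Suppose $k$ received its last good of the new run at an iteration where $i$ was still in $U$, and let $W'$ be the allocation at the start of that iteration, so $|W'_k| = |Y_k| - 1$. The greedy rule gives $\phi(W', k) \ge \phi(W', i)$, and since bundle sizes only grow, (G2) yields $\phi(Y'', k) = \phi(W', k) \ge \phi(W', i) \ge \phi(Y, i) = \phi(Y'', i)$ --- the reverse of your target. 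In this configuration nothing contradicts the guarantees of \Cref{thm:weighted-yankee-leximin} for $Y$: if the inequality is strict, (G1) gives $Y \succ_{\Psi} Z$; if it is an equality, the tie-break that made the algorithm pick $k$ at $W'$ forces $k < i$, so it is $Y$ that lexicographically dominates $Z$, not the other way around. Your argument stalls exactly where you anticipated the difficulty.

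The reason this cannot be repaired by reasoning about the new run alone is that your proof uses only the non-redundancy of $X$ (to get independence of $T$), never the fact that $X$ is the mechanism's \emph{output} on the original profile --- and the lemma is false without that hypothesis. Take two agents and one good $g$ with $v_1(\{g\}) = v_2(\{g\}) = 1$; the mechanism gives $g$ to one agent, say agent $1$. Let $X$ be the non-redundant allocation giving $g$ to agent $2$, and $T = \{g\}$. Then $f_T$ coincides with $v_2$, so the rerun returns the same output and $Y_2 = \emptyset \ne T$. Every step of your argument applies verbatim to this instance (with $i = 2$, $k = 1$), so it cannot possibly end in a contradiction. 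The paper's proof supplies the missing ingredient through a four-way case analysis on the agents $p$ (maximizing $\phi(Y,\cdot)$ among those with $|Y_\cdot| < |Z_\cdot|$) and $q$ (maximizing $\phi(Z,\cdot)$ among those with $|Y_\cdot| > |Z_\cdot|$): when $\phi(X,q) > \phi(Y,p)$, or equality holds with $q < p$, it augments in the exchange graph of $X$ under the \emph{original} valuations and contradicts the $\Psi$-maximality or lexicographic dominance of $X$ from the first run; only in the complementary cases does it argue, as you do, against the optimality of $Y$. Some such appeal to the optimality of $X$ is unavoidable, and it is exactly what your proposal lacks.
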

\begin{proof}
Since the allocation $Y$ is non-redundant, we have that $Y_i \subseteq T$. Assume for contradiction that $Y_i \ne T$. Define an allocation $Z$ as $Z_h = X_h$ for all $h \in N - i$ and $Z_i = T$; allocate the remaining goods in $Z$ to $Z_0$. Note that both $Y$ and $Z$ are non-redundant under both valuation profiles (with the old $v_i$ and the new valuation function $f_T$).

Let us compare $Y$ and $Z$. 
Let $p \in N$ be the agent with highest $\phi(Y, p)$ such that $|Y_p| < |Z_p|$; choose the agent with the least $p$ if there are ties.  Such an element is guaranteed to exist since $|Y_i| < |Z_i|$.

If there exists no $q \in N$ such that $|Y_q| > |Z_q|$, we must have $|Y_0| > |Z_0|$ (since $|Y_i| < |Z_i|$). 
Using Lemma \ref{lem:augmentation-sufficient} with agent $p$ and the allocations $Y$ and $Z$, we get that there is a path from $p$ to $0$ in the exchange graph of $Y$. 
Transferring goods along the shortest such path results in an allocation with a higher USW than $Y$ under the new valuation profile contradicting the fact that $Y$ is \MAXUSW.

Let $q \in N$ be the agent with highest $\phi(Z, q)$ such that $|Y_q| > |Z_q|$; break ties by choosing the least $q$. Further, note that since $Z$ and $X$ only differ in $i$'s bundle and $|Y_i| < |Z_i|$, we must have $|X_q| = |Z_q|$.

Consider two cases:
\begin{enumerate}[(i)]
    \item $\phi(X, q) > \phi(Y, p)$,
    \item $\phi(X, q) = \phi(Y, p)$ and $q < p$
\end{enumerate}
Then invoking Lemma \ref{lem:augmentation-sufficient} with allocations $X$, $Y$ and the agent $q$, there exists a transfer path from $q$ to some agent $k$ in the exchange graph of $X$ (w.r.t. the old valuations) where $|Y_k| < |X_k|$. Transferring along the shortest such path gives us a non-redundant allocation $X'$ where $|X'_q| =|X_q| + 1$ and $|X'_k| = |X_k| - 1$ (Lemma \ref{lem:path-augmentation}). 
Let $X''$ be an allocation starting at $X'$ and removing one good from $X'_q$.
If $\phi(X'', q) > \phi(X'', k)$, then $X' \succ_{\Psi} X$ (using (G1)) contradicting our assumption on $X$.

If $k = 0$, we improve USW contradicting the fact that $X$ is \MAXUSW with respect to the original valuations $\{v_h\}_{h \in N}$. 

\noindent\textbf{For case (i):} If $k \ne 0$, we have $\phi(X'', q) = \phi(X, q) > \phi(Y, p) \ge \phi(Y, k) \ge \phi(X'', k)$ (using (G1) and (G2)). Therefore $\phi(X'', q) > \phi(X'', k)$ and $X$ does not maximize $\Psi$ --- a contradiction.

\noindent\textbf{For case (ii):} If $k \ne 0$, we have $\phi(X'', q) = \phi(X, q) = \phi(Y, p) \ge \phi(Y, k) \ge \phi(X'', k)$. If any of these weak inequalities are strict, we can use analysis similar to that of case (i) to show that $X$ does not maximize $\Psi$. 
Therefore, all the weak inequalities must be equalities and we must have $\phi(X'', q) = \phi(X, q) = \phi(Y, p) = \phi(Y, k) = \phi(X'', k)$. This implies that $X =_{\Psi} X'$ using (G1).

Moreover, by our choice of $p$ we have $p \le k$ and by assumption, we have $q < p$. Combining the two, this gives us $q < k$. Therefore, $X'$ lexicographically dominates $X$ --- a contradiction to Theorem \ref{thm:weighted-yankee-leximin}.

Let us now move on to the remaining two possible cases
\begin{enumerate}[(i)]\addtocounter{enumi}{2}
    \item $\phi(Z, q) < \phi(Y, p)$,
    \item $\phi(Z, q) = \phi(Y, p)$ and $q > p$
\end{enumerate}
Recall that both $Y$ and $Z$ are non-redundant with respect to the new valuation profile (with $f_T$). If any of the above two conditions occur, then invoking Lemma \ref{lem:augmentation-sufficient} with allocations $Y$, $Z$ and the agent $p$, there exists a transfer path from $p$ to some agent $l$ in the exchange graph of $Y$ where $|Y_l| > |Z_l|$. Transferring along the shortest such path gives us a non-redundant allocation $Y'$ where $|Y'_p| = |Y_p| + 1$ and $|Y'_l| = |Y_l| - 1$ (Lemma \ref{lem:path-augmentation}). 

Let $Y''$ be an allocation starting at $Y'$ and removing one good from $Y'_p$.
If $\phi(Y'', p) > \phi(Y'', l)$, then $Y' \succ_{\Psi} Y$ contradicting our assumption on $Y$.

If $l = 0$, we improve USW contradicting the fact that $Y$ is \MAXUSW with respect to the new valuation $f_T$. 

\noindent\textbf{For case (iii):} If $l \ne 0$, we have $\phi(Y'', p) = \phi(Y, p) > \phi(Z, q) \ge \phi(Z, l) \ge \phi(Y'', l)$ (using (G1) and (G2)). Therefore $\phi(Y'', p) > \phi(Y'', l)$ and $Y$ does not maximize $\Psi$ --- a contradiction.

\noindent\textbf{For case (iv):} If $l \ne 0$, we have $\phi(Y'', p) = \phi(Y, p) = \phi(Z, q) \ge \phi(Z, l) \ge \phi(Y'', l)$.  If any of the weak inequalities are strict, we can use analysis similar to that of case (iii) to show that $Y$ does not maximize $\Psi$. 

Therefore, all the weak inequalities must be equalities and we must have $\phi(Y'', p) = \phi(Y, p) = \phi(Z, q) = \phi(Z, l) = \phi(Y'', l)$. This implies that $Y' =_{\Psi} Y$.

Moreover, by our choice of $q$ we have $q \le l$ and by assumption, we have $p < q$. Combining the two, this gives us $p < l$. Therefore, $Y'$ lexicographically dominates $Y$ --- contradicting the fact that $Y$ is a $\Psi$-maximizing allocation with respect to the new valuation profile (with $f_T$).

Since $|Z_q| = |X_q|$, $\phi(Z, q) = \phi(X, q)$. Therefore, cases (i)--(iv) cover all possible cases. Each of the cases lead to a contradiction. Therefore, our proof is complete and $Y_i = T$. 
\end{proof}
\begin{restatable}{lemma}{lemmonotonicity}\label{lem:monotonicity}
Let $X$ be the output allocation of the Yankee Swap mechanism with input valuation profile $\{v_i\}_{i \in N}$. For some agent $i \in N$, replace $v_i$ with some $v'_i$ such that $v'_i(S) \ge v_i(S)$ for all $S \subseteq G$ and run the mechanism again to get an allocation $Y$. We must have $|Y_i| \ge |X_i|$.
\end{restatable}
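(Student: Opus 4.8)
The plan is to assume for contradiction that $|Y_i| < |X_i|$ and derive a contradiction by mirroring the case analysis of \Cref{lem:faithfulness}, but now playing the two valuation profiles against each other. The enabling preliminary observation is that $X$ is non-redundant not only under the original profile but also under the modified one: since $v'_i$ is an MRF we have $v'_i(X_i) \le |X_i|$, while monotonicity of the report gives $v'_i(X_i) \ge v_i(X_i) = |X_i|$, hence $v'_i(X_i) = |X_i|$. Thus $X$ has the same utility vector under both profiles and is a legitimate non-redundant allocation in the modified instance, which is exactly the lever that lets me compare it against $Y$. I also let $\tilde Y$ denote a non-redundant representative of $Y$ under the \emph{original} profile (\Cref{lem:non-redundant-wlog}); it agrees with $Y$ on every coordinate except possibly $i$, where $|\tilde Y_i| = v_i(Y_i) \le |Y_i|$.

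With $|Y_i| < |X_i|$, I would pick $p \in N$ to be the deficient agent (one with $|Y_p| < |X_p|$) of highest $\phi(Y,p)$, breaking ties by least index; this exists because $i$ itself is deficient. If no agent is in surplus (no $q$ with $|Y_q| > |X_q|$), then $|Y_0| > |X_0|$, so \Cref{lem:augmentation-sufficient} supplies a path from $F_p(Y)$ to $Y_0$ in $\cal G(Y)$ under the modified profile; augmenting along it (\Cref{lem:path-augmentation}) strictly raises utilitarian welfare, contradicting that $Y$ is \MAXUSW (\Cref{prop:yankee-swap-usw}). Otherwise let $q$ be the surplus agent ($|Y_q| > |X_q|$) of highest $\phi(X,q)$, least index to break ties; note $q \ne i$ since $i$ is deficient, and $p \ne q$. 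I would then split on $\phi(X,q)$ versus $\phi(Y,p)$ into the same four cases as \Cref{lem:faithfulness}: (i) $\phi(X,q) > \phi(Y,p)$; (ii) equality with $q < p$; (iii) $\phi(X,q) < \phi(Y,p)$; (iv) equality with $q > p$.

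The crucial new ingredient is that cases (i)--(ii) are resolved in the original instance while (iii)--(iv) are resolved in the modified instance. In (i)--(ii) I apply \Cref{lem:augmentation-sufficient} to $(X, \tilde Y, q)$ under the original profile --- legitimate because $X$ is non-redundant there and $|X_q| < |\tilde Y_q| = |Y_q|$ --- obtaining a path from $F_q(X)$ to some target $k$ with $|X_k| > |\tilde Y_k|$; every such $k$ is deficient, so $\phi(Y,p) \ge \phi(Y,k)$. Augmenting $X$ along this path raises $q$ and lowers $k$; using (G1) on the common predecessor allocation together with (G2) to push $\phi(\cdot,k)$ down to $\phi(Y,k)$, case (i) yields a strict $\Psi$-improvement of $X$ and case (ii) yields a $\Psi$-equal allocation that lexicographically dominates $X$ (since $q < p \le k$). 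Both contradict the optimality and lexicographic domination of $X$ guaranteed by \Cref{thm:weighted-yankee-leximin} \emph{in the original instance}. Symmetrically, cases (iii)--(iv) apply \Cref{lem:augmentation-sufficient} to $(Y, X, p)$ under the modified profile, where both allocations are non-redundant and the surplus target $l \ne i$, and contradict the optimality and lexicographic domination of $Y$ \emph{in the modified instance}. Throughout, the target $k$ or $l$ being the dummy agent $0$ is disposed of by the \MAXUSW property, exactly as in the base case.

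The main obstacle I anticipate is bookkeeping the coordinate $i$ correctly while juggling the two profiles: because $Y$ may be redundant under the original valuations, every size comparison and every appeal to \Cref{lem:augmentation-sufficient} in the original instance must be routed through $\tilde Y$, and one has to verify that $q$, $k$, $l$ never coincide with $i$ in the ways that would break the $\phi$-monotonicity chains (they do not, since $i$ is deficient and hence never in surplus). The equality cases (ii) and (iv) are, as in \Cref{lem:faithfulness}, the delicate ones, since there the contradiction comes not from a strict $\Psi$-gain but from the index-based tie-breaking underlying lexicographic domination, so one must confirm $q < k$ (respectively $p < l$) from the least-index choices of $p$ and $q$.
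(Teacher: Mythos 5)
Your proposal is correct and follows essentially the same route as the paper's own proof: your $\tilde Y$ is exactly the paper's pruned allocation $Z$, and your choice of $p$ and $q$, the four-way case split on $\phi(X,q)$ versus $\phi(Y,p)$, the resolution of cases (i)--(ii) in the original instance and (iii)--(iv) in the modified instance via \Cref{lem:augmentation-sufficient}, the (G1)/(G2) inequality chains, the disposal of targets $k=0$ or $l=0$ via \MAXUSW, and the lexicographic tie-breaking appeal to \Cref{thm:weighted-yankee-leximin} all match the paper's argument. The bookkeeping concerns you flag (non-redundancy of $X$ under both profiles, $q$ and $l$ never being $i$, and $q<k$ respectively $p<l$ in the equality cases) are precisely the details the paper verifies, and your handling of them is sound.
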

\begin{proof}[Proof Sketch]
Assume for contradiction that $|Y_i| < |X_i|$. Let $T$ be a subset of $Y_i$ such that $|T| = v_i(T) = v_i(Y_i)$. Define an allocation $Z$ as $Z_h = Y_h$ for all $h \in N - i$ and $Z_i = T$; allocate the remaining goods in $Z$ to $Z_0$. $X$ and $Z$ are non-redundant under both valuation profiles. This construction allows us to use a similar case by case analysis to that of Lemma \ref{lem:faithfulness} to prove the required result.
\end{proof}
We are now ready to show strategyproofness. 
\begin{theorem}
When agents have matroid rank valuations, the Yankee Swap mechanism is strategyproof.
\end{theorem}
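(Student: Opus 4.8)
The plan is to fix the reports of all agents other than $i$ throughout, and to show that agent $i$'s true utility from reporting truthfully is at least its true utility from any misreport. Write $X$ for the output when $i$ reports its true valuation $v_i$ and $Y$ for the output when $i$ reports some alternative $v'_i$; the goal is exactly $v_i(Y_i) \le v_i(X_i)$. Since the mechanism zeroes out any non-MRF report (which is itself the MRF $f_\emptyset$), I may assume without loss of generality that $v'_i$ is an MRF, so that \Cref{lem:faithfulness} and \Cref{lem:monotonicity} apply to every profile I invoke.

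The central idea is to route the misreport through a carefully chosen \emph{bridge} valuation of the form $f_T$. I would let $T \subseteq Y_i$ be a maximal subset of $i$'s misreport bundle that is independent under the true valuation, so that $|T| = v_i(T) = v_i(Y_i)$; such a $T$ exists by the matroid rank structure of $v_i$, and it captures precisely the true utility $i$ extracts from misreporting. The argument is then two applications of the preceding lemmas chained through $f_T$.

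First I would apply faithfulness (\Cref{lem:faithfulness}) to the profile in which $i$ reports $v'_i$, whose output is $Y$: because $T \subseteq Y_i$, replacing $i$'s report by $f_T$ yields an allocation in which $i$ receives exactly $T$, and hence a bundle of size $|T| = v_i(Y_i)$. Next I would compare the report $f_T$ with the truthful report $v_i$, using the pointwise inequality $f_T \le v_i$: for every $S \subseteq G$ we have $f_T(S) = |S \cap T| = v_i(S \cap T) \le v_i(S)$, where the middle equality holds because subsets of the independent set $T$ are independent under $v_i$, and the inequality is monotonicity of the rank function. Monotonicity of the mechanism (\Cref{lem:monotonicity}) applied to the increase from $f_T$ to $v_i$ then gives $|X_i| \ge |T|$. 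Since $X$ is non-redundant under the true valuation, $v_i(X_i) = |X_i| \ge |T| = v_i(Y_i)$, completing the argument.

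I expect essentially all of the difficulty to have been front-loaded into \Cref{lem:faithfulness} and \Cref{lem:monotonicity}; given those, this final step is short. The one genuinely new ingredient is the construction of the bridge valuation $f_T$ from a basis of the misreport bundle $Y_i$, which is what makes the two lemmas composable: faithfulness converts ``$i$ misreports $v'_i$'' into ``$i$ reports $f_T$ and receives exactly $T$,'' and monotonicity then shows that truthful reporting can only weakly improve on $f_T$. The main point to handle with care is keeping the other agents' reports fixed across all three profiles ($v'_i$, $f_T$, and $v_i$) so that both lemmas are invoked on valid single-agent deviations, and verifying $f_T \le v_i$ via the independence of $T$.
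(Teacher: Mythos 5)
Your proposal is correct and follows essentially the same route as the paper's own proof: choose a basis $T$ of the misreport bundle under the true valuation, use \Cref{lem:faithfulness} to show that reporting $f_T$ yields exactly $T$, then use \Cref{lem:monotonicity} with the pointwise inequality $f_T \le v_i$ and non-redundancy to conclude $v_i(X_i) = |X_i| \ge |T| = v_i(Y_i)$. Your verification of $f_T \le v_i$ via independence of subsets of $T$ is a slightly cleaner phrasing of the same check the paper performs, but the argument is otherwise identical.
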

\begin{proof}
Assume an agent $i$ reports $v'_i$ instead of their true valuation $v_i$ to generate the allocation $X'$ via the Yankee Swap mechanism. Let $X$ be the allocation generated by the mechanism had they reported their true valuation $v_i$. We need to show that $v_i(X_i) \ge v_i(X'_i)$.
We can assume w.l.o.g. that $v'_i$ is an MRF; otherwise, $i$ gets nothing and $v_i(X'_i) = 0$.

Let $B$ be a subset of $X'_i$ such that $|B| = v_i(B) = v_i(X'_i)$. 
Using Lemma \ref{lem:faithfulness}, we get that replacing $v'_i$ with $f_B$ will result in an allocation $Y$ where $Y_i = B$. 
Using Lemma \ref{lem:monotonicity}, we get that replacing $f_B$ with $v_i$ gives us the allocation $X$ and the guarantee $v_i(X_i) = |X_i| \ge |B| = v_i(B)$. 
Note that we can apply Lemma \ref{lem:monotonicity} since by construction we have $v_i(S) \ge v_i(S \cap B) = |S \cap B| = f_B(S)$ for all $S$ such that $S \cap B \ne \emptyset$ and $v_i(S) \ge 0 = f_B(S)$ otherwise.
Since $v_i(B) = v_i(X'_i)$, the proof is complete.
\end{proof}

\section{Time Complexity}\label{sec:time-complexity}
We turn to analyzing the time complexity of General Yankee Swap. We assume that agent valuations are computed using an oracle which takes $T_v = \Omega(m)$ time; since reading the input bundle alone will take any oracle $\Omega(m)$ time. 
This assumption only exists to simplify the time complexity expression; it does not affect the number of valuation queries made by the algorithm in any way. Our result uses the technique of combining binary search with breadth first search first introduced by \citet{Chakrabarty2019MatroidIntersection}.

We store the allocation using two data structures: 
\begin{inparaenum}[(a)]
    \item a binary matrix referred to as $X$, and 
    \item an inverse mapping $X^{-1}$ that maps each good to the agent it is allocated to in $X$.
\end{inparaenum}

\begin{algorithm}[t]
    \caption{\FindDesired$(i, S, B)$}
    \label{algo:find-desired}
    \DontPrintSemicolon
    \SetKwInOut{Input}{Input}
    \SetKwInOut{Output}{Output}
    \Input{An agent $i$, a bundle $S \subseteq G$, and another bundle $B \subseteq G$}
    \Output{An element $g \in B$ such that $\Delta_i(S, g) = 1$ or $\emptyset$ if no such element exists}
    \If{$v_i(S\cup B) = v_i(S)$}{
        \Return $\emptyset$\;
    }
    \While{$|B| > 1$}{
        Let $B_1$ and $B_2$ be a partition of $B$ such that $\max\{|B_1|, |B_2|\} \le \big  \lceil \frac{|B|}{2} \big \rceil$\;
        \uIf{$v_i(S \cup B_1) > v_i(S)$}{
            $B \gets B_1$\;
        }
        \Else{
            $B \gets B_2$\;
        }
    }
    \Return $B$\;
\end{algorithm}

Most proofs in this section are straightforward and have been relegated to the appendix. We start with a simple binary search procedure to check if, given an agent $i$ and a bundle $S$, there is a good $g$ in some bundle $B \subseteq G$ such that $\Delta_i(S, g) = 1$. 
The algorithm is simple: we first check if $v_i(S \cup B) > v_i(S)$. 
If this condition is satisfied, we partition the set $B$ into two equal sized sets $B_1$ and $B_2$ and check if $v_i(S \cup B_j) > v_i(S)$ for each $j \in \{1, 2\}$. 
If the condition holds for any one $B_j$ (say $B_1$), we repeat the process by dividing $B_1$ into two subsets until $B_1$ is a singleton element. 
We refer to this procedure as \FindDesired{} (Algorithm \ref{algo:find-desired}).
\begin{restatable}{lemma}{lemfinddesired}\label{lem:find-desired}
The procedure $\FindDesired(i, S, B)$ runs in $O(T_v \log |B|)$ time and finds a good $g \in B$ such that $\Delta_i(S, g) = 1$ if it exists; otherwise, the procedure outputs $\emptyset$.
\end{restatable}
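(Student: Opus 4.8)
The plan is to establish two things: correctness (the output is an element $g \in B$ with $\Delta_i(S,g)=1$ whenever one exists, and $\emptyset$ otherwise) and the claimed running time of $O(T_v \log |B|)$. I would handle correctness first, since the time bound is almost immediate once the loop structure is understood.

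For \textbf{correctness}, the key observation is a simple consequence of submodularity: for any bundle $S$ and any set $B$, we have $\Delta_i(S, g) = 1$ for some $g \in B$ if and only if $v_i(S \cup B) > v_i(S)$. The forward direction is trivial; for the reverse, note that if every $g \in B$ had $\Delta_i(S,g) = 0$, then by repeatedly applying submodularity (condition (c) of the MRF definition, so that adding $g$ to a larger set gives marginal gain at most its marginal gain over $S$) one gets $v_i(S \cup B) = v_i(S)$. This justifies the initial guard: the procedure correctly returns $\emptyset$ exactly when no desired good exists. I would then argue that the loop maintains the invariant that $v_i(S \cup B) > v_i(S)$ at the top of every iteration. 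At each step we split $B$ into $B_1, B_2$; since $v_i(S \cup B_1 \cup B_2) = v_i(S \cup B) > v_i(S)$, at least one of $v_i(S \cup B_1)$, $v_i(S \cup B_2)$ must strictly exceed $v_i(S)$ (again by submodularity, as the gain from adding all of $B$ over $S$ cannot exceed the sum of the gains from adding $B_1$ and then $B_2$). The algorithm picks $B_1$ if it works and $B_2$ otherwise, so whichever branch is taken preserves the invariant. When the loop exits, $|B| = 1$, and the invariant tells us the single remaining element $g$ satisfies $v_i(S + g) > v_i(S)$, i.e. $\Delta_i(S, g) = 1$, which is exactly the output we want.

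For the \textbf{time bound}, observe that the initial check costs two valuation queries, hence $O(T_v)$ time. Each loop iteration performs a constant number of valuation queries (checking $v_i(S \cup B_1)$, and reusing the already-computed $v_i(S)$), so it costs $O(T_v)$. Because the partition always satisfies $\max\{|B_1|,|B_2|\} \le \lceil |B|/2 \rceil$, the size of $B$ at least halves each iteration, so the loop runs $O(\log |B|)$ times. Multiplying gives the claimed $O(T_v \log |B|)$.

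The \textbf{main obstacle} is making the submodularity argument fully rigorous at the split step, i.e. showing that $v_i(S \cup B) > v_i(S)$ forces at least one half to retain the strict gain. The cleanest route is to prove the ``if and only if'' characterization as a standalone sub-claim and then apply it twice: once to $B$ (to conclude the gain is realized somewhere in $B$) and, at each iteration, contrapositively to rule out the case where \emph{both} halves have zero gain — since submodularity gives $v_i(S \cup B) - v_i(S) \le [v_i(S \cup B_1) - v_i(S)] + [v_i(S \cup B_2) - v_i(S)]$, both halves having equal value to $S$ would contradict the invariant. Everything else is bookkeeping.
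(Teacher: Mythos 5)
Your proof is correct and follows essentially the same approach as the paper's: establish that $v_i(S\cup B) > v_i(S)$ holds iff some $g \in B$ has $\Delta_i(S,g)=1$, show the strict-gain invariant survives each halving, and multiply $O(\log|B|)$ iterations by $O(T_v)$ per iteration. The only cosmetic difference is at the split step, where you invoke subadditivity of marginal gains over $B_1, B_2$ while the paper places the witness good $g$ in one of the halves and uses monotonicity; both are immediate consequences of submodularity and yield the same invariant.
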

We can use the \FindDesired{} procedure to find shortest paths in the exchange graph using breadth first search without explicitly building the exchange graph. 
We refer to this procedure as \GetDistances{} and its steps can be found in Algorithm \ref{algo:get-distances}. 

To find the shortest path from $F_i(X)$ to $X_0$ in the exchange graph, we add a source node $s$ and edges from $s$ to all the goods in $F_i(X)$. We then use a slightly modified breadth first search to find single source shortest paths in the exchange graph from $s$; the slight modification being that we use \FindDesired{} to find outgoing edges. 

\begin{algorithm}[t]
    \caption{\GetDistances$(X, i)$}
    \label{algo:get-distances}
    \DontPrintSemicolon
    \SetKwInOut{Input}{Input}
    \SetKwInOut{Output}{Output}
    \Input{An agent $i$ chosen by General Yankee Swap and a non-redundant allocation $X$}
    \Output{Shortest paths from $s$ to every good $g$ in $\cal G(X)$}
    Let $d_g \gets \infty, \prev_g \gets \texttt{None}$ for all $g \in G$\;
    $Q \gets \{s\}$, $B \gets G$\;
    \While{$Q \ne \emptyset$}{
        Let $a$ be the element added to $Q$ the earliest\;
        \uIf{$a = s$}{
            \While{$b = \FindDesired(i, X_i, B)$ satisfies $b \ne \emptyset$}{
                $d_b \gets 1$, $\prev_b \gets s$, $Q \gets Q +b$, $B \gets B - b$\;
            }
            $Q \gets Q - s$\;
        }
        \Else{
            $j = X^{-1}(a)$\;
            \tcp{$a$ is allocated to $j$ under $X$}
            \While{$b = \FindDesired(j, X_j - a, B)$ satisfies $b \ne \emptyset$}{
                $d_b \gets d_a + 1$, $\prev_b \gets a$, $Q \gets Q +b$, $B \gets B - b$\;
            }
            $Q \gets Q - a$\;
        }
    }
    \Return $d, \prev$\;
\end{algorithm}

\begin{restatable}{lemma}{lemgetdistances}\label{lem:get-distances}
On input a non-redundant allocation $X$ and an agent $i$, the procedure $\GetDistances(X, i)$ runs in $O(m T_v \log m)$ time and computes the distances and the shortest paths from the node $s$ to each good $g \in G$.
\end{restatable}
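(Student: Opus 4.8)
The plan is to prove the statement in two parts. First I would show that $\GetDistances$ faithfully simulates a breadth-first search (BFS) from the source $s$ on the exchange graph $\cal G(X)$, which establishes correctness of both the distances $d$ and the $\prev$ pointers. Then I would give an amortized counting argument bounding the total number of calls to $\FindDesired$, which yields the running time.

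For correctness, I would first pin down the out-neighbor structure of each node. By construction the out-neighbors of $s$ are exactly the goods in $F_i(X) = \{g : \Delta_i(X_i, g) = 1\}$. For a good $a$ owned by agent $j = X^{-1}(a)$, the exchange graph has an edge $a \to b$ precisely when $v_j(X_j - a + b) = v_j(X_j)$. Here I would invoke non-redundancy of the input $X$: since $v_j(X_j) = |X_j|$ the bundle $X_j$ is independent, so $v_j(X_j - a) = |X_j| - 1 = v_j(X_j) - 1$, and the edge condition becomes equivalent to $\Delta_j(X_j - a, b) = 1$. This is exactly the predicate tested by $\FindDesired(j, X_j - a, \cdot)$. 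Hence the inner while-loops, which repeatedly call $\FindDesired$ on the shrinking set $B$ and stop only upon receiving $\emptyset$ (Lemma~\ref{lem:find-desired}), enumerate precisely the out-neighbors of $a$ that currently lie in $B$.

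Next I would establish the invariant that $B$ is exactly the set of undiscovered goods, i.e. those with $d_g = \infty$: a good leaves $B$ exactly when it is first assigned a finite distance and enqueued, and it is never re-added. Consequently, dequeuing nodes in FIFO order and relaxing only their out-neighbors still in $B$ is verbatim the standard BFS relaxation step, with already-discovered neighbors correctly skipped. Standard BFS correctness on an unweighted digraph then gives that $d_g$ is the shortest-path distance from $s$ to $g$ in $\cal G(X)$ and that following $\prev$ recovers a shortest path. The subtlety to handle carefully is that restricting $\FindDesired$ to $B$ never discards an edge a correct computation would need: this follows from the FIFO invariant that queued distances are monotone, so each good is first reached along a shortest path and no later, longer relaxation could improve it. For the running time, the key observation is that each good is inserted into $Q$ at most once, since insertion coincides with removal from $B$; hence at most $m+1$ nodes (the $m$ goods plus $s$) are ever processed. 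I would then bound the total number of $\FindDesired$ calls by amortization: the number of \emph{successful} calls over the whole run equals the number of goods discovered, at most $m$, while each processed node contributes exactly one \emph{terminating} call returning $\emptyset$, adding at most $m+1$ more. So there are $O(m)$ calls, each costing $O(T_v \log m)$ by Lemma~\ref{lem:find-desired} (using $|B| \le m$). All remaining bookkeeping --- the FIFO queue, the set $B$, the arrays $d$ and $\prev$, and the $O(1)$ lookups $X^{-1}(a)$ --- is dominated since $T_v = \Omega(m)$, giving the claimed $O(m T_v \log m)$ bound.

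The main obstacle I anticipate is the correctness argument rather than the counting: specifically, justifying that the single, monotonically shrinking global set $B$ interacts correctly with per-node neighbor enumeration, so that searching only within $B$ (rather than over all of $G$) still reproduces genuine BFS and never omits an edge needed for a shortest path. Establishing the invariant $B = \{g : d_g = \infty\}$ and combining it with the monotone-distance property of FIFO BFS is the crux; once that is in place, both the neighbor characterization via non-redundancy and the amortized $O(m)$ call bound follow directly.
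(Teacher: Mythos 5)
Your proposal is correct and follows essentially the same route as the paper's proof: characterize the calls to $\FindDesired$ as enumerating out-neighbors in $\cal G(X)$ (using non-redundancy for the edge condition), conclude correctness from standard BFS, and bound the number of $\FindDesired$ calls by $O(m)$ since each good enters $Q$ and leaves $B$ at most once. Your write-up merely makes explicit two points the paper leaves implicit --- the invariant $B = \{g : d_g = \infty\}$ and the marginal-gain reformulation $v_j(X_j - a + b) = v_j(X_j) \Leftrightarrow \Delta_j(X_j - a, b) = 1$ --- which is a welcome elaboration, not a different argument.
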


We can put these two results together to get the time complexity of the algorithm. Our proof uses some observations from \citet{viswanathan2022yankee}.

\begin{theorem}\label{thm:general-yankee-time}
Assuming that the worst case time to compute the value of any bundle of goods is $\Omega(m)$, 
the General Yankee Swap algorithm runs in $O([m T_v\log m + n(b+T_\phi)](m+n))$ time; where $T_v$ is the complexity of computing the value of a bundle of goods and $T_\phi$ is the complexity of computing $\phi$.
\end{theorem}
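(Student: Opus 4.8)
The plan is to write the total runtime as the product of two factors: the number of iterations of the while loop, and the worst-case cost incurred within a single iteration. Both factors admit clean bounds, and combining them gives the stated complexity.

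First I would bound the number of iterations by $m + n$. At every pass through the loop exactly one of two things happens: either a shortest path from $F_i(X)$ to $X_0$ is found and path augmentation is performed, or no such path exists and the chosen agent $i$ is removed from $U$. By \Cref{lem:path-augmentation}, each successful augmentation (ending in $X_0$) raises $v_i(X_i)$ by exactly $1$ while leaving every other agent's utility unchanged, so the utilitarian social welfare $\sum_{h \in N} v_h(X_h)$ increases by exactly one. Since the maintained allocation is always non-redundant (\Cref{lem:general-yankee-nonredundant}), this welfare equals $\sum_{h \in N} |X_h| \le m$, so there can be at most $m$ augmentation iterations. Each removal iteration deletes one agent from $U$, and agents are never re-added, so there are at most $n$ such iterations. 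Hence the loop runs at most $m + n$ times.

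Next I would bound the cost of one iteration. Selecting the agent requires evaluating $\phi(X, k)$ for every $k \in U$ (at most $n$ evaluations, each costing $T_\phi$) and comparing the resulting $b$-dimensional vectors lexicographically (each comparison $O(b)$), for a total of $O(n(b + T_\phi))$. Finding the shortest path from $F_i(X)$ to $X_0$ is performed by \GetDistances, which by \Cref{lem:get-distances} costs $O(m T_v \log m)$. Finally, path augmentation updates bundles along a path that visits each good at most once, so it touches at most $m$ goods; updating the binary matrix $X$ and the inverse map $X^{-1}$ is $O(1)$ per good, giving $O(m)$ overall. Since $T_v = \Omega(m)$ by assumption, this is dominated by the $m T_v \log m$ term. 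Thus a single iteration costs $O(m T_v \log m + n(b + T_\phi))$, and multiplying by the iteration bound $m+n$ yields the claimed $O([m T_v \log m + n(b + T_\phi)](m + n))$.

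The only mildly delicate step is the iteration count: one must argue that every iteration falls into exactly one of the two categories, tie each productive iteration to a unit increase of a quantity capped by $m$, and tie each unproductive iteration to a permanent deletion from $U$. This hinges on non-redundancy (so that welfare equals total bundle size and is bounded by $m$) and on the fact that removed agents never return to $U$. The remaining per-iteration accounting is routine, since the path-augmentation update and the agent-selection scan are both dominated by, or match, the already-established cost terms.
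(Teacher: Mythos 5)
Your proposal is correct and follows essentially the same argument as the paper: bound the number of iterations by $m+n$ (at most $m$ productive augmentations plus at most $n$ permanent removals from $U$), then charge each iteration with $O(mT_v\log m)$ for \GetDistances{} (Lemma \ref{lem:get-distances}), $O(m)$ for the augmentation update, and $O(n(b+T_\phi))$ for agent selection. The only cosmetic difference is that you cap the productive iterations via the unit increase of utilitarian social welfare (using non-redundancy), whereas the paper counts the unit decrease of $|X_0|$; these are equivalent accountings of the same quantity.
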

\begin{proof}
We make three observations. First, the algorithm runs for at most  $(m+n)$ iterations. At each round either $|X_0|$ reduces by $1$ or an agent is removed from $U$. $X_0$ monotonically decreases and agents do not return to $U$; therefore we can only have at most $m+n$ iterations. 

Second, finding shortest paths in the exchange graph takes $O(m T_v \log m)$ time (Lemma \ref{lem:get-distances}) and updating the allocation using path augmentation takes $O(m)$ time.

Finally, finding $i$ involves computing $\phi$ for each agent and then comparing the $\phi$ values. Since the output of $\phi$ has $b$ components, each comparison takes at most $O(b)$ time. Therefore, finding $i$ takes $n(b + T_{\phi})$ time.
Combining these three observations, we get the required time complexity.
\end{proof}

Due to the similarity of our algorithm with that of \citet{viswanathan2022yankee}, their time complexity result applies to our algorithm as well. However, their analaysis uses a naive implementation of breadth first search that uses $O((m+n)m^2)$ valuation queries. Our implementation on the other hand improves this to $O((m+n)m\log m)$ valuation queries. 

\section{Applying General Yankee Swap}\label{sec:applications}
In this section, we show how Yankee Swap can be applied to optimize commonly used justice criteria. This section showcases how simple the problem of optimizing fairness objectives becomes when using General Yankee Swap.

\subsection{Prioritized Lorenz Dominating Allocations}
As a sanity check, we first show how General Yankee Swap computes prioritized Lorenz dominating allocations. 

An allocation $X$ is {\em Lorenz dominating} if for all allocations $Y$ and for any $k \in [n]$, the sum of the utilities of the $k$ agents with least utility in $X$ is at least as much as the sum of the utilities of the $k$ agents with least utility in $Y$. 
An allocation $X$ is {\em leximin} if it maximizes the lowest utility and subject to that; maximizes the second lowest utility and so on.

Both these metrics can be formalized using the sorted utility vector. The {\em sorted utility vector} of an allocation $X$ (denoted by $\vec s^X$) is defined as the utility vector $\vec u^X$ sorted in ascending order. 
An allocation $X$ is Lorenz dominating if for all allocations $Y$ and all $k \in [n]$, we have $\sum_{j \in [k]} s_j^X \ge \sum_{j \in [k]} s_j^Y$.
An allocation $X$ is leximin if the sorted utility vector of $X$ is not lexicographically dominated by the sorted utility of any other allocation. 
A Lorenz dominating allocation is not guaranteed to exist, but when it does, it is equivalent to a leximin allocation (which is guaranteed to exist). 
This result holds for arbitrary valuation functions.

\begin{lemma}\label{lem:lorenz-dominance-leximin}
When a Lorenz dominating allocation exists, an allocation is leximin if and only if it is Lorenz dominating.
\end{lemma}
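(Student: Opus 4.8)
The plan is to prove the two directions of the equivalence separately, treating the sorted utility vector $\vec s^X$ as the central object, and invoking the existence hypothesis only where it is genuinely needed.

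First I would dispatch the easy direction, which in fact holds unconditionally: every Lorenz dominating allocation is leximin. Suppose $X$ is Lorenz dominating but not leximin, so some allocation $Y$ has $\vec s^Y$ lexicographically dominating $\vec s^X$. By the definition of lexicographic domination there is an index $k$ with $s^Y_j = s^X_j$ for all $j < k$ and $s^Y_k > s^X_k$. Summing the first $k$ coordinates then gives $\sum_{j \in [k]} s^Y_j = \sum_{j \in [k-1]} s^X_j + s^Y_k > \sum_{j \in [k]} s^X_j$, which directly contradicts the Lorenz dominance inequality $\sum_{j \in [k]} s^X_j \ge \sum_{j \in [k]} s^Y_j$. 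Hence $X$ is leximin.

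The key structural observation for the converse is that all leximin allocations share the same sorted utility vector. Since lexicographic domination is a total order on vectors of a fixed length, among the finitely many achievable sorted utility vectors there is a unique lexicographically maximal one, and an allocation is leximin precisely when its sorted utility vector equals this maximum. I would state this and then argue as follows: assume a Lorenz dominating allocation $Z$ exists; by the first direction $Z$ is leximin, and for any leximin allocation $X$ the observation gives $\vec s^X = \vec s^Z$. Since $Z$ satisfies $\sum_{j \in [k]} s^Z_j \ge \sum_{j \in [k]} s^Y_j$ for every allocation $Y$ and every $k \in [n]$, and $\vec s^X = \vec s^Z$, the allocation $X$ satisfies the same family of inequalities and is therefore Lorenz dominating.

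The main obstacle — really the only nontrivial point — is justifying that leximin allocations have a unique sorted utility vector, i.e. that ``not lexicographically dominated by any other sorted vector'' coincides with ``lexicographically maximal.'' This is exactly where totality of the lexicographic order on sorted vectors is essential: without it, incomparable maximal elements could in principle coexist and the converse could fail. Everything else reduces to a one-line manipulation of partial sums, and the finiteness of the allocation space (which guarantees the lexicographic maximum is attained) is what underlies the earlier parenthetical remark that a leximin allocation always exists.
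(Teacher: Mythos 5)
Your proof is correct, but it is organized differently from the paper's. The paper runs a single symmetric argument: take any leximin allocation $X$ and any Lorenz dominating allocation $Y$ (the latter exists by hypothesis), and look at the first index $k$ where their sorted utility vectors differ; if $s^X_k < s^Y_k$ then $\vec s^Y$ lexicographically dominates $\vec s^X$, contradicting leximin-ness of $X$, while if $s^X_k > s^Y_k$ then $\sum_{j \in [k]} s^X_j > \sum_{j \in [k]} s^Y_j$, contradicting Lorenz dominance of $Y$. Hence $\vec s^X = \vec s^Y$, and since both properties depend only on the sorted utility vector, both directions of the equivalence drop out at once. You instead split the statement: you first prove ``Lorenz dominating $\Rightarrow$ leximin'' unconditionally by essentially the same partial-sum manipulation, and then obtain the converse from an auxiliary lemma --- all leximin allocations share one sorted utility vector, because lexicographic domination is a total order on distinct vectors and the set of achievable vectors is finite --- applied to the Lorenz dominating allocation $Z$, which is leximin by your first step. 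Both proofs rest on the identical elementary fact (a first-coordinate disagreement between sorted vectors yields a strict partial-sum inequality), but your decomposition makes explicit two things the paper leaves implicit: that one direction requires no existence hypothesis at all, and that the converse is precisely where totality of the lexicographic order (hence uniqueness of the leximin sorted vector) enters. The paper's version is shorter and never needs to state the uniqueness lemma; yours localizes the role of each hypothesis more cleanly.
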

\begin{proof}
Let $Y$ be any Lorenz dominating allocation and let $X$ be any leximin allocation. 
Assume for contradiction that they do not have the same sorted utility vector. Let $k$ be the lowest index such that $s^X_k \ne s^Y_k$. 
If $s^X_k < s^Y_k$, then $\vec s^Y$ lexicographically dominates $\vec s^X$ contradicting the fact that $X$ is leximin. If $s^X_k > s^Y_k$, then $Y$ is not Lorenz dominating. 
Since $s_k^X = s_k^Y$ for all $k$, both allocations have the same sorted utility vector. This implies that $X$ is Lorenz dominating and $Y$ is leximin.
\end{proof}

\citet{Babaioff2021Dichotomous} introduce and study the concept of {\em prioritized Lorenz dominating allocations}. Each agent is given a priority which is represented using a permutation $\pi: [n] \mapsto [n]$. 
When agents have MRF valuations $\{v_i\}_{i \in N}$, prioritized Lorenz dominating allocations are defined as Lorenz dominating allocations for the fair allocation instance where valuations are defined as $v'_i(S) = v_i(S) + \frac{\pi(i)}{n^2}$; we refer to $v'$ as {\em perturbed valuations}. 
\citet{Babaioff2021Dichotomous} show that when agents have MRF valuations, a prioritized Lorenz dominating allocation is guaranteed to exist and satisfies several desirable fairness properties such as leximin, envy freeness up to any good (EFX) and maximizing Nash welfare. Prioritized Lorenz dominating allocations can be computed using the following gain function:

\begin{theorem}\label{thm:lorenz-dominance}
When agents have MRF valuations, General Yankee Swap with $\phi(X, i)$ set to  $(-v_i(X_i), -\pi(i))$ computes prioritized Lorenz dominating allocations with respect to priority $\pi$.
\end{theorem}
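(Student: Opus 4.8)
The plan is to instantiate the general machinery of \Cref{thm:weighted-yankee-leximin}. Define the justice criterion $\Psi$ that compares two integer utility vectors by the leximin order on their \emph{perturbed} utilities: for $\vec x \in \Z^n_{\ge 0}$ let the perturbed vector be $(x_i + \pi(i)/n^2)_{i \in N}$, and declare $\vec x \succeq_\Psi \vec y$ iff the ascending sort of the perturbed vector of $\vec x$ lexicographically dominates (or equals) that of $\vec y$. Since $\pi$ is a permutation, the offsets $\pi(i)/n^2 \in \{1/n^2, \dots, 1/n\}$ are pairwise distinct and each lies in $(0,1)$; consequently every perturbed value simultaneously encodes its integer part $x_i$ and, via its fractional part, the agent $i$. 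This shows that distinct integer utility vectors induce distinct sorted perturbed vectors, so $\succeq_\Psi$ is a genuine total order on $\Z^n_{\ge 0}$. A $\Psi$-maximizing allocation is exactly a leximin allocation with respect to the perturbed valuations $v'_i$; since a prioritized Lorenz dominating allocation is guaranteed to exist \citep{Babaioff2021Dichotomous}, \Cref{lem:lorenz-dominance-leximin} shows leximin and Lorenz dominating coincide for $v'$, so any $\Psi$-maximizer is precisely a prioritized Lorenz dominating allocation. It therefore suffices to check that $\Psi$ satisfies (C1) and (C2) with $\phi(X,i) = (-v_i(X_i), -\pi(i))$ and invoke \Cref{thm:weighted-yankee-leximin}.

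For (C1), suppose $x_h \ge y_h$ for all $h$. Adding the fixed offsets preserves this domination coordinatewise in perturbed space, and sorting preserves coordinatewise domination; hence the sorted perturbed vector of $\vec x$ dominates that of $\vec y$ coordinatewise, which in particular gives $\vec x \succeq_\Psi \vec y$. For the equality clause, if $\vec x \ne \vec y$ then some $x_h > y_h$ makes the domination strict somewhere, and the distinctness argument above forces the sorted perturbed vectors to differ, so $\vec x \succ_\Psi \vec y$; conversely $\vec x = \vec y$ gives equality trivially.

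The heart of the argument is (C2). Condition (G2) is immediate: the first coordinate of $\phi(\vec x, i) = (-x_i, -\pi(i))$ is nonincreasing in $x_i$ and the second is constant, so $x_i \le y_i$ yields $\phi(\vec x, i) \ge \phi(\vec y, i)$ lexicographically, with equality exactly when $x_i = y_i$. For (G1), fix $\vec x$ and $i \ne j$ (the case $i = j$ is trivial since then $\vec y = \vec z$). Writing $a = x_i + \pi(i)/n^2$ and $b = x_j + \pi(j)/n^2$, the key observation is that $\phi(\vec x, i) \ge \phi(\vec x, j)$ is equivalent to $a < b$: because each offset is strictly below $1$, the integer comparison $x_i < x_j$ already forces $a < b$, and when $x_i = x_j$ the tie is broken by $\pi(i) \le \pi(j)$, which (using $\pi$ injective) means $\pi(i) < \pi(j)$ and again $a < b$. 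Now $\vec y$ and $\vec z$ have identical perturbed values except at coordinates $i$ and $j$: $\vec y$ contributes the pair $\{a+1, b\}$ there, while $\vec z$ contributes $\{a, b+1\}$. Since injectivity of $\pi$ keeps $a, a+1, b, b+1$ distinct from each other and from the shared coordinates, the smallest value at which the two multisets differ is $a$, which belongs to $\vec z$ only. Under the leximin order this makes $\vec y$ strictly preferable, i.e. $\vec y \succ_\Psi \vec z$. Finally, for $i \ne j$ we always have $\phi(\vec x, i) \ne \phi(\vec x, j)$ (their second coordinates differ whenever the first tie), so the equality clause of (G1) is vacuous.

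I expect the main obstacle to be the (G1) verification: one must argue that the infinitesimal priority offsets faithfully reproduce the intended tie-breaking without ever overriding the integer utilities, and that transferring the extra unit to the agent of smaller perturbed utility genuinely improves the leximin order. The clean way to see the latter is the multiset comparison above—reducing to the four values $a, a+1, b, b+1$ and identifying $a$ as the unique smallest point of disagreement. The surrounding bookkeeping (that injectivity of $\pi$ makes every perturbed value distinct, so that $\Psi$ is a total order and the leximin comparison is unambiguous) is routine but essential.
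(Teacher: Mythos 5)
Your proposal is correct and follows essentially the same route as the paper: define $\Psi$ as leximin over the perturbed valuations $v'_i = v_i + \pi(i)/n^2$, invoke the existence result of \citet{Babaioff2021Dichotomous} together with \Cref{lem:lorenz-dominance-leximin} to identify $\Psi$-maximizers with prioritized Lorenz dominating allocations, and then verify (C1), (G2), and (G1) for $\phi(X,i) = (-v_i(X_i), -\pi(i))$ so that \Cref{thm:weighted-yankee-leximin} applies. Your explicit multiset comparison of $\{a+1, b\}$ versus $\{a, b+1\}$ in fact makes rigorous the step the paper handles with the informal assertion that ``it is always better to add utility to a lower valued agent,'' which is a welcome tightening but not a different argument.
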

\begin{proof}
Since {\em Lorenz domination} is not a total ordering over the set of possible utility vectors, we instead compute {\em leximin allocations} under the perturbed valuations $v'$. Such an allocation is guaranteed to be a prioritized Lorenz dominating allocation since prioritized Lorenz dominating allocations are guaranteed to exist \citep{Babaioff2021Dichotomous} and are equivalent to leximin allocations when they do exist (Lemma \ref{lem:lorenz-dominance-leximin}).

Formally, for any two allocations $X$ and $Y$, $X \succ_{\Psi} Y$ if $\vec s^X$ lexicographically dominates $\vec s^Y$ where the sorted utility vectors $\vec s^X$ and $ \vec s^Y$ are defined according to the perturbed valuations $v'$. Note crucially that agents still have MRF valuations; the perturbed valuations $v'$ are only used to define the justice criterion $\Psi$.
$\Psi$ trivially satisfies Pareto Dominance (C1) and $\phi$ trivially satisfies (G2). We therefore only show (G1).

For any vector $\vec x \in \Z^n_{\ge 0}$ and two agents $i$ and $j$, let $\vec y \in \Z^n_{\ge 0}$ be the vector resulting from starting at $\vec x$ and adding $1$ to $ x_i$. Similarly, let $\vec z \in \Z^n_{\ge 0}$ be the vector resulting from starting at $\vec x$ and adding a value of $1$ to $x_j$. We assume that $\phi(\vec x, i) > \phi(\vec x, j)$, and show that $\vec y \succ_{\Psi} \vec z$. 
Note that since $\pi(i) \ne \pi(j)$, $\phi(\vec x,i)$ can never equal $\phi(\vec x,j)$.


If $\phi(\vec x, i) > \phi(\vec x, j)$ one of the following two cases must be true.

\noindent\textbf{Case 1:} $x_i < x_j$. If this is true, we have $x_j + \frac{\pi(j)}{n^2} > x_i + \frac{\pi(i)}{n^2}$ since $\frac{\pi(i)}{n^2} - \frac{\pi(j)}{n^2} < 1 \le x_j - x_i$.
Therefore $\vec y \succ_{\Psi} \vec z$ since it is always better to add utility to a lower valued agent (according to $v'$).

\noindent\textbf{Case 2:} $x_i = x_j$ and $\pi(i) < \pi(j)$. If this is true, we have $x_j + \frac{\pi(j)}{n^2} > x_i + \frac{\pi(i)}{n^2}$ by assumption. Again, $\vec y \succ_{\Psi} \vec z$ since it is always better to add utility to a lower valued agent (according to $v'$).
\end{proof}

\subsection{Weighted Leximin Allocations}
Let us next consider the case where agents have entitlements.
When each agent $i$ has a positive weight $w_i$, the weighted utility of an agent $i$ is defined as $\frac{v_i(X_i)}{w_i}$. A weighted leximin allocation maximizes the least weighted utility and subject to that, maximizes the second least weighted utility and so on. 

More formally, we define the {\em weighted sorted utility vector} of an allocation $X$ (denoted by $\vec e^X$) as $\left(\frac{v_1(X_1)}{w_1}, \frac{v_2(X_2)}{w_2}, \dots, \frac{v_n(X_n)}{w_n}\right)$ sorted in ascending order. 
An allocation $X$ is weighted leximin if for no other allocation $Y$, $\vec e^Y$ lexicographically dominates $\vec e^X$. We have the following result, the proof of which has been relegated to the appendix due to its similarity with Theorem \ref{thm:lorenz-dominance}.

\begin{restatable}{theorem}{thmweightedleximin}\label{thm:weighted-leximin}
When agents have MRF valuations and each agent $i$ has a weight $w_i$, General Yankee Swap with $\phi(X, i) = (-\frac{v_i(X_i)}{w_i}, -w_i)$ computes a weighted leximin allocation.
\end{restatable}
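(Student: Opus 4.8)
The plan is to reduce Theorem~\ref{thm:weighted-leximin} to Theorem~\ref{thm:weighted-yankee-leximin} by verifying that the weighted leximin criterion, together with the proposed gain function $\phi(X,i) = (-\frac{v_i(X_i)}{w_i}, -w_i)$, satisfies conditions (C1) and (C2). Following the template of the proof of Theorem~\ref{thm:lorenz-dominance}, I would first address the subtlety that weighted leximin is stated as a comparison of \emph{sorted} weighted utility vectors, which is not literally a total order defined coordinatewise on $\Z^n_{\ge 0}$. So I would formally define $\Psi$ by: $X \succ_\Psi Y$ iff $\vec e^X$ lexicographically dominates $\vec e^Y$, where $\vec e^X$ is the weighted sorted utility vector. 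Weighted leximin allocations are exactly $\Psi$-maximizing allocations, and since lexicographic comparison of sorted vectors is total, $\succeq_\Psi$ is a total order, so the framework applies.

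Next I would verify (C1). If $x_h \ge y_h$ for all $h$, then $\frac{x_h}{w_h} \ge \frac{y_h}{w_h}$ for all $h$, and sorting preserves this dominance coordinatewise, so $\vec e^X$ weakly lexicographically dominates $\vec e^Y$; strictness in one coordinate of $\vec u$ forces a strict gain somewhere in the sorted weighted vector, giving $\vec x \succ_\Psi \vec y$ unless $\vec x = \vec y$. Condition (G2) is immediate: $\phi(\vec x, i) = (-\frac{x_i}{w_i}, -w_i)$ depends on $x_i$ only through its decreasing first coordinate, so $x_i \le y_i$ yields $\phi(\vec x, i) \ge \phi(\vec y, i)$, with equality exactly when $x_i = y_i$ (the second coordinate $-w_i$ is fixed). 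The real work is (G1).

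For (G1), given $\vec x$ and agents $i,j$, let $\vec y$ add one to $x_i$ and $\vec z$ add one to $x_j$. I assume $\phi(\vec x, i) \ge \phi(\vec x, j)$ and must show $\vec y \succeq_\Psi \vec z$ with the equality condition. The key structural observation, which I would prove as a small claim, is that $\vec y$ and $\vec z$ differ from $\vec x$ in a single coordinate each, so comparing the sorted weighted vectors $\vec e^{\vec y}$ and $\vec e^{\vec z}$ reduces to comparing where the two incremented weighted utilities $\frac{x_i + 1}{w_i}$ and $\frac{x_j + 1}{w_j}$ land. Since $\phi(\vec x, i) \ge \phi(\vec x, j)$ means $-\frac{x_i}{w_i} > -\frac{x_j}{w_j}$, or $\frac{x_i}{w_i} = \frac{x_j}{w_j}$ with $-w_i \ge -w_j$ (i.e.\ $w_i \le w_j$), I would argue that raising the agent with the lower (or tied-but-lighter) current weighted utility produces a lexicographically larger sorted vector. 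Concretely, in the strict case $\frac{x_i}{w_i} < \frac{x_j}{w_j}$ the increment to $i$ raises the smaller entry, which is exactly the entry the leximin order prioritizes; in the tie case $\frac{x_i}{w_i} = \frac{x_j}{w_j}$ with $w_i \le w_j$ we get $\frac{x_i+1}{w_i} \ge \frac{x_j+1}{w_j}$, so incrementing $i$ again weakly improves the sorted vector, with equality iff $w_i = w_j$ (and hence $\phi(\vec x, i) = \phi(\vec x, j)$).

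The main obstacle I anticipate is the careful case analysis establishing that ``raising the lower weighted-utility agent lexicographically improves the sorted weighted vector'', since the two allocations $\vec y$ and $\vec z$ share the common multiset of the other $n-1$ weighted utilities and differ only in which single value gets bumped; I would need to track how the one changed entry reinserts into the sorted order relative to the shared values, and confirm the equality-iff condition matches $\phi(\vec x, i) = \phi(\vec x, j)$ exactly. Because the paper flags the proof as similar to Theorem~\ref{thm:lorenz-dominance} and relegates it to the appendix, I expect this argument to mirror the two-case split (Case~1: $\frac{x_i}{w_i} < \frac{x_j}{w_j}$; Case~2: equality with $w_i < w_j$) used there, and the equality-handling for the weight tie to be the one genuinely new bookkeeping step.
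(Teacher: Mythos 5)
Your proposal is correct and follows essentially the same route as the paper's appendix proof: define $\succ_\Psi$ as lexicographic domination of the sorted weighted utility vectors $\vec e^X$, dismiss (C1) and (G2) as immediate, and prove (G1) by the two-case split ($\frac{x_i}{w_i} < \frac{x_j}{w_j}$, and $\frac{x_i}{w_i} = \frac{x_j}{w_j}$ with $w_i < w_j$) plus the weight-tie equality case. If anything, you are more explicit than the paper about the one delicate step --- tracking how the single incremented entry reinserts into the sorted order relative to the shared entries --- which the paper glosses over with ``it is always better to add utility to agents with lower weighted utility.''
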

As \citet{chakraborty2021pickingsequences} show, the weighted leximin solution behaves in counterintuitive ways. 
Consider the case of a single item $g$ worth $1$ to two agents. Agent 1 has a weight of $w_1=1$ and agent 2 has a weight of $w_2=2$. Giving the item to the higher priority agent (agent 2) results in the sorted utility vector $(0,\frac{v_2(g)}{w_2}) = (0,\frac12)$, whereas giving the item to agent 1 results in the vector $(0,\frac{v_1(g)}{w_1}) = (0,1)$, which is lexicographically dominant. In other words, giving items to lower priority agents is \emph{better}. 
However, this undesirable behavior only occurs when there are fewer items than agents. 
In the run of General Yankee Swap, once every agent receives one item, the gain function prioritizes higher weight agents as should be expected.      
Thus, we believe that weighted leximin allocations can still be reasonably considered in the weighted domain. 
\subsection{Individual Fair Share Allocations}\label{sec:MMS}
We now turn to justice criteria that guarantee each agent a minimum \emph{fair share} amount. One such popular notion is the {\em maximin share}. 
An agent's maximin share \citep{Budish2011EF1} is defined as the utility an agent would receive if they divided the set of goods into $n$ bundles themselves and picked the worst bundle. 
More formally, the maximin share of an agent $i$ (denoted by $\MMS_i$) is defined as $\MMS_i = \max_{X = (X_1, X_2, \dots, X_n)} \min_{j \in [n]} v_i(X_j)$. 
There are several other fair share metrics popular in the literature \citep{farhadi2019wmms, babaioff2021wmms, babaioff2022fairshare}. 
All of these metrics have the same objective --- each agent $i$ has an instance dependent {\em fair share} $c_i \ge 0$; 
the goal is to compute allocations that guarantee each agent a high fraction of their share \citep{procaccia2014fairenough, ghodsi2018fair}.

We define the {\em fair share fraction} of an agent $i$ in an allocation $X$ as $\frac{v_i(X_i)}{c_i}$ when $c_i > 0$ and $0$ when $c_i = 0$. 
When $c_i = 0$, any bundle of goods (even the empty bundle) provides $i$ their fair share; 
therefore, an agent $i$ with $c_i = 0$ can be ignored when allocating bundles. 
When agents have matroid rank valuations, General Yankee Swap can be used to maximize the lowest fair share fraction received by an agent and subject to that, maximize the second lowest fair share fraction and so on. 
Using a proof very similar to Theorem \ref{thm:weighted-leximin}, the appropriate $\phi(X, i)$ to achieve such a fairness objective is defined as follows:
\begin{align}
    \phi(X, i) = 
    \begin{cases}
        (-\frac{v_i(X_i)}{c_i}, -c_i) & c_i > 0 \\
        (-M, 0) & c_i = 0
    \end{cases}\label{eq:fair-share}
\end{align}
where $M$ is a large positive number greater than any possible $\frac{v_i(X_i)}{c_i}$.
This can be seen as setting the weight of each agent $i$ to their share $c_i$ and computing a weighted leximin allocation. 
The only minor change we make is accounting for cases where $c_i = 0$: in such a case we make $\phi(X, i)$ the lowest possible value it can take so we do not allocate any goods to these agents. 
The only time Yankee Swap allocates items to these agents is when all the other agents with positive shares do not derive a positive marginal gain from any of the remaining unallocated goods.
More formally, given an allocation $Y$, let $\vec \mu(Y)$ be the \emph{sorted fair-share normalized utilty vector} of the values $\frac{v_i(Y_i)}{c_i}$ sorted in increasing order for all agents with $c_i > 0$ (we ignore the agents whose fair share is $0$), then: 
\begin{theorem}\label{thm:fair-share}
Let $X$ be the allocation output by General Yankee Swap with the gain function $\phi$ defined in \Cref{eq:fair-share}. Then, among all possible allocations, $X$ has a lexicographically dominating sorted fair-share normalized utility vector $\vec \mu(X)$. 
\end{theorem}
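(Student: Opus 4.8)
The plan is to realize \Cref{thm:fair-share} as an instance of \Cref{thm:weighted-yankee-leximin}. To do so I must exhibit a justice criterion $\Psi$ that (i) satisfies (C1) and (C2) with the gain function $\phi$ of \Cref{eq:fair-share}, and (ii) has the property that every $\Psi$-maximizing allocation has a lexicographically dominating sorted fair-share normalized utility vector $\vec\mu$. I would define $\Psi$ as a two-level lexicographic order: the \emph{primary} criterion is the leximin order on $\vec\mu$ (compare the sorted fair-share normalized vectors over the agents with $c_i > 0$, larger being better), and the \emph{secondary} tie-break is the total utility $\sum_{i : c_i = 0} v_i(X_i)$ accrued by the agents with $c_i = 0$. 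Being a lexicographic combination of two total preorders, $\Psi$ is itself a total preorder on $\Z^n_{\ge 0}$. With this definition, if $X$ maximizes $\Psi$ then in particular no allocation beats it on the primary criterion, so $\vec\mu(X)$ is lexicographically dominating; this is exactly the desired conclusion, so everything reduces to checking the conditions.

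Checking (C1) is the first step. If $\vec x \ge \vec y$ coordinate-wise with a strict inequality somewhere, I split on where the strict gain occurs. A strict gain at some agent with $c_i > 0$ strictly increases $v_i(X_i)/c_i$, hence strictly lex-raises the sorted vector $\vec\mu$, and $\vec x$ wins on the primary criterion; a strict gain confined to agents with $c_i = 0$ leaves $\vec\mu$ unchanged but strictly increases the secondary sum, so $\vec x$ wins there. In both cases $\vec x \succ_\Psi \vec y$, and equality among Pareto-comparable vectors occurs only when $\vec x = \vec y$, as required. The secondary term is precisely what repairs (C1) in the second sub-case, where only zero-share agents differ and the primary criterion is blind.

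The substantive work is condition (C2), and here I would verify (G1) by splitting on whether the two agents have positive or zero fair share. When both $c_i, c_j > 0$, the gain function and the primary criterion coincide exactly with the weighted leximin setting of \Cref{thm:weighted-leximin} (weights $c_i$), so the same argument applies verbatim, with equality precisely when $c_i = c_j$ and $x_i = x_j$; note that adding a good to a positive-share agent never touches the secondary term, so it cannot interfere. When $c_i = c_j = 0$, the gains are equal ($(-M,0)$), and adding to either agent leaves $\vec\mu$ fixed while raising the secondary sum by exactly one, so the two resulting vectors are $\Psi$-equivalent, matching $\phi(\vec x, i) = \phi(\vec x, j)$. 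The crucial mixed case is $c_i > 0,\ c_j = 0$: since $M$ exceeds every attainable $v_i(X_i)/c_i$ we get $\phi(\vec x, i) = (-x_i/c_i, -c_i) > (-M, 0) = \phi(\vec x, j)$, and indeed adding to $i$ strictly lex-raises $\vec\mu$ while adding to $j$ leaves $\vec\mu$ fixed, so $\vec y \succ_\Psi \vec z$; the reversed ordering of gains makes the remaining case vacuous. Condition (G2) is immediate: for $c_i > 0$ the first component $-v_i(X_i)/c_i$ is strictly decreasing in the agent's utility, and for $c_i = 0$ the gain is the constant $(-M,0)$, which trivially satisfies the required monotonicity.

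With (C1) and (C2) established, \Cref{thm:weighted-yankee-leximin} guarantees that General Yankee Swap with the gain function of \Cref{eq:fair-share} returns a $\Psi$-maximizing allocation, and by the primary criterion its vector $\vec\mu(X)$ is lexicographically dominating, completing the proof. I expect the main obstacle to be the bookkeeping in (G1): calibrating the threshold $M$ so that zero-share agents sit strictly at the bottom of the priority order, while simultaneously picking a secondary tie-break that both repairs (C1) when only zero-share agents differ and leaves the weighted-leximin analysis for the positive-share agents undisturbed. Once the secondary ``$c_i = 0$ sum'' term is pinned down, every case collapses either to the argument of \Cref{thm:weighted-leximin} or to a one-line direct check.
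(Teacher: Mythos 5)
Your proposal is correct and follows exactly the route the paper intends: the paper omits this proof, saying only that it is ``very similar to that of \Cref{thm:weighted-leximin}'' (i.e., weighted leximin with weights $c_i$ and special handling of agents with $c_i=0$), which is precisely your reduction to \Cref{thm:weighted-yankee-leximin}. Your explicit secondary tie-break $\sum_{i : c_i = 0} v_i(X_i)$ --- needed so that the equality clause of (C1) holds when two allocations differ only on zero-share agents, since $\vec\mu$ alone cannot distinguish them --- is a necessary detail that the paper's omitted proof glosses over, and you verify it correctly.
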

The proof of \Cref{thm:fair-share} is very similar to that of \Cref{thm:weighted-leximin}, and is thus omitted.
As an immediate corollary of Theorem \ref{thm:fair-share}, it is no longer necessary to find a fair share fraction that can be guaranteed to all agents, and then design an algorithm which allocates each agent at least this fraction of their fair share. 
General Yankee Swap automatically computes an allocation which maximizes the lowest fair share fraction received by any agent.
A straightforward corollary is that, when there exists an allocation that guarantees each agent their fair share, Yankee Swap outputs one such allocation. \citet{Barman2021MRFMaxmin} show that when agents have MRF valuations, an allocation which guarantees each agent their maximin share always exists. 
Using their result with Theorem \ref{thm:fair-share}, we have the following Corollary.

\begin{corr}\label{corr:MMS-MRF}
When agents have MRF valuations and every agent has $c_i = \MMS_i$, General Yankee Swap run with $\phi(X, i)$ given by \eqref{eq:fair-share} computes a \MAXUSW and \MMS allocation.
\end{corr}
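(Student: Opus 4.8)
The plan is to read off both properties from machinery that is already in place, doing almost no new work beyond bookkeeping. The \MAXUSW part is immediate: \Cref{prop:yankee-swap-usw} asserts that the output of General Yankee Swap is \MAXUSW for \emph{any} input gain function, so it holds in particular for the function of \eqref{eq:fair-share} with $c_i = \MMS_i$. Everything else is devoted to the \MMS guarantee, i.e.\ establishing $v_i(X_i) \ge \MMS_i$ for every agent $i$.

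First I would invoke the existence result of \citet{Barman2021MRFMaxmin}: when agents have MRF valuations there is an allocation $Z$ with $v_i(Z_i) \ge \MMS_i$ for all $i \in N$. For every agent with $\MMS_i > 0$ this means the fair-share fraction $\frac{v_i(Z_i)}{\MMS_i} \ge 1$, so every coordinate of the sorted fair-share normalized utility vector $\vec \mu(Z)$ is at least $1$; in particular its smallest (first) coordinate is $\ge 1$.

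Next I would appeal to \Cref{thm:fair-share}, which guarantees that the output $X$ has a lexicographically dominating $\vec \mu(X)$ among all allocations; in particular $\vec \mu(Z)$ does not lexicographically dominate $\vec \mu(X)$. Since the set of agents with $\MMS_i = 0$ depends only on the valuations and not on the allocation, $\vec \mu(X)$ and $\vec \mu(Z)$ are indexed over the same agents and have the same length, so their coordinates can be compared directly. Were the first coordinate of $\vec \mu(X)$ strictly smaller than that of $\vec \mu(Z)$, then $\vec \mu(Z)$ would lexicographically dominate $\vec \mu(X)$, a contradiction. Hence the smallest fair-share fraction under $X$ is at least that under $Z$, which is $\ge 1$. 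This yields $\frac{v_i(X_i)}{\MMS_i} \ge 1$, i.e.\ $v_i(X_i) \ge \MMS_i$, for every agent with $\MMS_i > 0$. Agents with $\MMS_i = 0$ are satisfied trivially, since any bundle, including the empty one, meets a share of $0$. Combining both parts shows $X$ is simultaneously \MAXUSW and \MMS.

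Given the heavy lifting done by \Cref{thm:fair-share}, \Cref{prop:yankee-swap-usw}, and the existence theorem of \citet{Barman2021MRFMaxmin}, the argument is short, and the only genuinely delicate point is the bookkeeping around agents with $\MMS_i = 0$: one must confirm that the zero-share agents coincide across $X$ and $Z$ so that the two sorted vectors are comparable coordinate-by-coordinate, and that the leximin comparison on the positive-share agents really forces the minimum fraction up to $1$. I expect this coordinate-matching step, rather than any deep inequality, to be the main thing to get right.
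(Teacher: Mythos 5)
Your proof is correct and takes essentially the same route as the paper, which likewise derives the \MMS{} guarantee by combining \Cref{thm:fair-share} with the existence result of \citet{Barman2021MRFMaxmin} and obtains \MAXUSW{} from \Cref{prop:yankee-swap-usw}. Your explicit handling of the lexicographic comparison and of the agents with $\MMS_i = 0$ merely spells out details the paper leaves implicit.
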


\citet{Barman2021MRFMaxmin} also present a polynomial time algorithm to compute the maximin share of each agent. 
Since the procedure to compute maximin shares is not necessarily strategyproof, the overall procedure of computing a maximin share allocation using Yankee Swap may not be strategyproof either.

\subsection{Max Weighted Nash Welfare Allocations}
We still assume that each agent $i$ has a weight $w_i > 0$.
For any allocation $X$, let $P_X$ be the set of agents who receive a positive utility under $X$. 
An allocation $X$ is said to be {\em max weighted Nash welfare} (denoted by \MWNW) if it first minimizes the number of agents who receive a utility of zero; subject to this, $X$ maximizes $\prod_{i \in P_X} v_i(X_i)^{w_i}$ \citep{chakraborty2021weighted, Suksumpong2022weightednash}. 
We define the gain function $\phi(X, i)$ as follows:
\begin{align}
    \phi(X, i) = 
    \begin{cases}
        \big (1 + \frac{1}{v_i(X_i)} \big )^{w_i} & v_i(X_i) > 0\\
        M & v_i(X_i) = 0
    \end{cases} \label{eq:phi-mwnw}
\end{align}
where $M$ is a large number greater than any possible $\big (1 + \frac{1}{v_i(X_i)} \big )^{w_i}$. 
We have the following Theorem.
\begin{theorem}\label{thm:weighted-nash}
When agents have MRF valuations and each agent $i$ has a weight $w_i$, General Yankee Swap with $\phi$ given by \eqref{eq:phi-mwnw} computes a max weighted Nash welfare allocation.
\end{theorem}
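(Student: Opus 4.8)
The plan is to reduce everything to \Cref{thm:weighted-yankee-leximin}: it suffices to exhibit a total order $\Psi$ whose maximizers are exactly the \MWNW allocations, and then to verify that $\Psi$ satisfies (C1) and that the gain function in \eqref{eq:phi-mwnw} satisfies (C2), i.e.\ (G1) and (G2). I would define $\Psi$ in three tiers: first prefer utility vectors with fewer zero coordinates; among those, prefer a larger weighted Nash product $\prod_{h : u_h > 0} u_h^{w_h}$; and finally break any remaining ties lexicographically by agent index, mirroring the algorithm's own tie-breaking and turning $\succeq_\Psi$ into a genuine total order on $\Z^n_{\ge 0}$. Since \MWNW is insensitive to the last tier, every $\Psi$-maximizer is an \MWNW allocation, so the theorem follows once (C1) and (C2) are established.

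For (C1), if $\vec x \ge \vec y$ coordinatewise then $\vec x$ has no more zeros than $\vec y$; and if the two have equally many zeros, the support of $\vec y$ is contained in (hence equal to) that of $\vec x$, on which $\vec x$ dominates, so the Nash product of $\vec x$ is at least that of $\vec y$. Thus $\vec x \succeq_\Psi \vec y$, strictly whenever $\vec x \ne \vec y$. For (G2) I would observe that $u \mapsto (1 + \tfrac1u)^{w_i}$ is strictly decreasing on the positive integers (since $1 + \tfrac1u$ decreases and $w_i > 0$) and that the value $M$ assigned when $u = 0$ exceeds every such value (it is enough that $M > \max_i 2^{w_i}$); hence $\phi(\vec x, i) \ge \phi(\vec y, i)$ whenever $x_i \le y_i$, with equality exactly when $x_i = y_i$.

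The heart of the argument is (G1). Fix $\vec x$ and agents $i, j$, and let $\vec y, \vec z$ be obtained by incrementing coordinate $i$, respectively $j$. The key identity is that incrementing a positive coordinate $x_i$ multiplies the Nash product by exactly $(1 + \tfrac1{x_i})^{w_i} = \phi(\vec x, i)$, so when $x_i, x_j > 0$ the zero-count is unchanged and $\vec y \succeq_\Psi \vec z$ holds iff $\phi(\vec x, i) \ge \phi(\vec x, j)$ --- precisely what (G1) demands. When $x_i = 0$ and $x_j > 0$ we have $\phi(\vec x, i) = M > \phi(\vec x, j)$, and $\vec y$ removes a zero while $\vec z$ does not, so $\vec y \succ_\Psi \vec z$ by the first tier; the symmetric case $x_i > 0 = x_j$ falsifies the hypothesis $\phi(\vec x,i)\ge\phi(\vec x,j)$ and needs no proof.

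The main obstacle is the residual case $x_i = x_j = 0$ (and, more generally, any exact tie $\phi(\vec x, i) = \phi(\vec x, j)$ with $i \ne j$): both $\vec y$ and $\vec z$ remove exactly one zero and scale the Nash product by $1^{w_i} = 1^{w_j} = 1$, so they agree on the first two tiers of $\Psi$ yet are distinct vectors. This is exactly the tension between the antisymmetry forced by (C1) and the equality clause of (G1), and resolving it is where care is needed. I would handle it by letting the third, index-based tier of $\Psi$ decide the comparison in lockstep with the algorithm's index tie-break, so that the output is not merely \MWNW-optimal but lexicographically dominating among all \MWNW allocations, as \Cref{thm:weighted-yankee-leximin} requires. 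Making this tie-break fully consistent with (G1) --- most cleanly by augmenting $\phi$ with a secondary index coordinate exactly as in \Cref{thm:lorenz-dominance} and \Cref{thm:weighted-leximin} --- is the one step I expect to demand genuine attention rather than routine verification.
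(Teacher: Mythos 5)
Your proposal's core is correct and coincides with the paper's own argument: the decisive observation that incrementing a positive coordinate $x_i$ scales the weighted Nash product by exactly $(1+\tfrac{1}{x_i})^{w_i} = \phi(\vec x, i)$, together with the zero-coordinate cases, is precisely the paper's Case 1--3 analysis. Where you diverge is in the handling of ties, and there the obstacle you identify is not real: the ``tension between the antisymmetry forced by (C1) and the equality clause of (G1)'' rests on a misreading of the framework. (C1) forces a strict $\Psi$-preference only between \emph{coordinatewise comparable} vectors, and the vectors $\vec y = \vec x + e_i$ and $\vec z = \vec x + e_j$ appearing in (G1) are never comparable in that sense when $i \ne j$ (each strictly exceeds the other in one coordinate); the relation $\succeq_{\Psi}$ is permitted to declare distinct, Pareto-incomparable vectors $\Psi$-equal, and the equality clause of (G1) is phrased exactly to accommodate this. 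Consequently the paper keeps the plain two-tier order (zero count, then Nash product) and dispatches your ``residual case'' $x_i = x_j = 0$ in one line: both gains equal $M$ and $\vec y =_{\Psi} \vec z$, which is all (G1) asks. Your alternative --- a third, index-based tier in $\Psi$ together with a secondary index coordinate in $\phi$, as in \Cref{thm:lorenz-dominance} and \Cref{thm:weighted-leximin} --- is also sound: the verification you flag as demanding ``genuine attention'' is in fact routine, the refined order still has only \MWNW allocations as maximizers since the extra tier merely breaks ties, and the algorithm's behavior is unchanged because its index tie-break already simulates your augmented $\phi$. What it buys is an antisymmetric order with a unique optimal utility vector; what it costs is extra verification that the paper avoids. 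Note also that the lexicographic-domination property you were trying to engineer into $\Psi$ is already a \emph{conclusion} of \Cref{thm:weighted-yankee-leximin}, not a hypothesis you need to encode.
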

\begin{proof}
Formally, we have for two allocations $X$ and $Y$, $X \succ_{\Psi} Y$ if any of the following two conditions hold
\begin{enumerate}[(a)]
    \item $|P_X| > |P_Y|$
    \item $|P_X| = |P_Y|$ and $\prod_{i \in P_X} v_i(X_i)^{w_i} > \prod_{i \in P_Y} v_i(Y_i)^{w_i}$
\end{enumerate}

It is easy to see that $\Psi$ satisfies Pareto dominance (C1) and $\phi$ satisfies (G2). 
To show that $\phi$ satisfies (G1), some minor case work is required. 
Let us define a vector $\vec x \in \Z^n_{\ge 0}$ and two agents $i, j \in N$. 
Let $\vec y$ be the vector resulting from starting at $\vec x$ and adding one unit to $x_i$ and similarly, let $\vec z$ be the allocation resulting from starting at $\vec x$ and adding one unit to $x_j$. 
We need to show that if $\phi(\vec x, i) < \phi(\vec x, j)$ then $\vec z \succ_{\Psi} \vec y$ and if $\phi(\vec x, i) = \phi(\vec x, j)$, then $\vec y =_{\Psi} \vec z$.
\begin{enumerate}[label={\bfseries Case \arabic*:},itemindent=*,leftmargin=0cm]
\item $x_i = x_j = 0$. In this case, it is easy to see that both $\phi(\vec x, i) = \phi(\vec x, j)$ and $\vec y =_{\Psi} \vec z$.
\item $x_i > x_j = 0$. By construction $\phi(\vec x, j) > \phi(\vec x, i)$. We also have $\vec z \succ_{\Psi} \vec y$ since $\vec z$ has fewer indices with the value $0$. Note that this argument also covers the case where $x_j > x_i = 0$.

\item $x_i > 0$  and $x_j > 0$. In this case, note that 
\begin{align*}
   \phi(\vec x, i) = \frac{(x_i + 1)^{w_i}}{{x_i}^{w_i}} = \frac{{y_i}^{w_i}}{{x_i}^{w_i}} = \frac{\prod_{h \in P_{\vec y}} {y_i}^{w_h}}{\prod_{h \in P_{\vec x}} {x_h}^{w_h}}
\end{align*}
where $P_{\vec x}$ denotes the number of non-zero valued indices in $\vec x$. Similarly, $\phi(\vec x, j) = \frac{\prod_{h \in P_{\vec z}} {z_h}^{w_h}}{\prod_{h \in P_{\vec x}} {x_h}^{w_h}}$. 
\end{enumerate}
Therefore, since $|P_{\vec y}| = |P_{\vec z}| > 0$, we have $\phi(\vec x, i) > \phi(\vec x, j)$ if and only if $\vec y \succ_{\Psi} \vec z$. Similarly, we have $\phi(\vec x, i) = \phi(\vec x, j)$ if and only if $\vec y =_{\Psi} \vec z$.
\end{proof}
When all weights are uniform, leximin and max Nash welfare allocations have the same sorted utility vector \citep{Babaioff2021Dichotomous}. 
This implies that they are equivalent notions of fairness. 
However, when agents have different weights, weighted leximin and weighted Nash welfare are not equivalent and can have different sorted utility vectors. Consider the following example:
\begin{example}
We have two agents $N = \{1, 2\}$ and six goods; $w_1 = 2$ and $w_2 = 8$. Both agents have additive valuations, and value all items at $1$.
Any weighted leximin allocation allocates two goods to agent $1$ and four goods to agent $2$, with 
a weighted sorted utility vector of $(\frac12, 1)$. 
However, any \MWNW allocation allocates one good to agent $1$ and five goods to agent $2$. 
This allocation has a worse sorted weighted utility vector $(\frac12, \frac58)$, but a higher weighted Nash welfare. 
\end{example}

\subsection{Max Weighted $p$-Mean Welfare Allocations}\label{sec:p-mean}
The weighted $p$-mean welfare of an allocation is given by $M_p(X) =  \big ( \sum_{i \in N} w_i \times v_i(X_i)^p \big)^{1/p}$ where $p \le 1$. $p$-mean welfare functions have been extensively studied in economics \citep{moulin2004fair} and machine learning \citep{cousins2021axiomatic, cousins2021bounds, heidari2018fairness}. 
To adapt it to fair allocation, we make one minor modification to ensure it is well defined. 
We define a max weighted $p$-mean welfare allocation $X$ as one that first maximizes the number of agents who receive a non-zero utility $P_X$ and subject to that, maximizes $\left( \sum_{i \in P_X} w_i \times v_i(X_i)^p \right)^{1/p}$.

The weighted $p$-mean welfare function as $p$ approaches $0$ corresponds to the weighted Nash welfare of an allocation, and as $p$ approaches $-\infty$, corresponds to the leximin allocation. 
For all the other $p$ values, we can compute max weighted $p$-mean welfare allocations using the following gain function
\begin{align}
    \phi(X, i) = 
    \begin{cases}
        w_i [(v_i(X_i) + 1)^p - v_i(X_i)^p] & p \in (0, 1] \text{ and } v_i(X_i) > 0 \\
        w_i [v_i(X_i)^p - (v_i(X_i) + 1)^p] & p < 0 \text{ and } v_i(X_i) > 0 \\
        M w_i & v_i(X_i) = 0
    \end{cases} \label{eq:phi-weighted-p-mean}
\end{align}
where $M$ is a number greater than any $w_i |(v_i(X_i) + d)^p - v_i(X_i)^p|$.
\begin{restatable}{theorem}{thmweightedpmean}\label{thm:weighted-p-mean}
When agents have MRF valuations and each agent $i$ has a weight $w_i$, General Yankee Swap with $\phi$ given by \eqref{eq:phi-weighted-p-mean} computes a max weighted $p$-mean welfare allocation for any $p \le 1$.
\end{restatable}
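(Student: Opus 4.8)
The plan is to reduce everything to the general framework in \Cref{thm:weighted-yankee-leximin}: it suffices to write down the total order $\succeq_\Psi$ induced by max weighted $p$-mean welfare and to check that it satisfies (C1), while the gain function in \eqref{eq:phi-weighted-p-mean} satisfies (G1) and (G2). Concretely I would set $X \succ_\Psi Y$ if $|P_X| > |P_Y|$, or $|P_X| = |P_Y|$ and $\big(\sum_{i \in P_X} w_i v_i(X_i)^p\big)^{1/p} > \big(\sum_{i \in P_Y} w_i v_i(Y_i)^p\big)^{1/p}$, with $X =_\Psi Y$ exactly when the utility vectors coincide; since weights and utilities are fixed this is a total order on $\Z^n_{\ge 0}$. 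The lexicographic-dominance half of the statement is then immediate from \Cref{thm:weighted-yankee-leximin}, so all the content lies in verifying the three conditions.

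Both (C1) and (G2) I expect to be routine. Writing $S_X = \sum_{i\in P_X} w_i v_i(X_i)^p$, one checks that $M_p = S_X^{1/p}$ is coordinatewise increasing in each $v_i$ even for $p<0$ (the exponent $1/p$ flips the sign of the decreasing $v_i^p$), and a coordinatewise increase can only enlarge $P_X$; hence Pareto improvements never lower the $\Psi$-value, with equality precisely when the vectors are equal, giving (C1). For (G2), fix $i$ and increase $x_i$: the positive branch is $\texttt{sign}(p)\,w_i[(x_i+1)^p - x_i^p]$, and setting $g(t) = \texttt{sign}(p)[(t+1)^p - t^p]$, the concavity of $t\mapsto t^p$ for $p\in(0,1]$ and its convexity for $p<0$ (each combined with $\texttt{sign}(p)$) make $g$ decreasing, so $\phi(\vec x,i)$ decreases in $x_i$; the zero branch $Mw_i$ is the maximal value once $M$ is large.

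The real work is (G1), which I would prove by casework on whether $x_i$ and $x_j$ are zero, paralleling the \MWNW argument in \Cref{thm:weighted-nash}. When $x_i,x_j>0$, activating neither changes $P$, so $\vec y \succeq_\Psi \vec z$ reduces to comparing $M_p$ after perturbing $S_{\vec x}$ by $\Delta S_i = w_i[(x_i+1)^p - x_i^p]$ versus $\Delta S_j$; since $M_p = S^{1/p}$ is increasing in $S$ for $p>0$ and decreasing for $p<0$, in both regimes $M_p(\vec y)\ge M_p(\vec z)$ is equivalent to $\texttt{sign}(p)\,\Delta S_i \ge \texttt{sign}(p)\,\Delta S_j$, which is exactly $\phi(\vec x,i)\ge \phi(\vec x,j)$ — so this branch is clean. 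When exactly one of $x_i,x_j$ is zero, activating the zero agent strictly increases $|P|$ and is therefore $\Psi$-preferred, matching the fact that the zero branch $Mw_i$ dominates any finite positive-branch value when $M$ is large.

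I expect the main obstacle to be the remaining case $x_i=x_j=0$. Here $\vec y$ and $\vec z$ share the same enlarged $P$, so the comparison again falls to $M_p$, and the new term contributed is $w_i\cdot 1^p = w_i$ (resp.\ $w_j$); thus $\vec y\succeq_\Psi \vec z$ is equivalent to $\texttt{sign}(p)\,w_i \ge \texttt{sign}(p)\,w_j$. For $p>0$ the larger-weight zero agent should be activated first, which is exactly what $Mw_i$ encodes; but for $p<0$ the \emph{smaller}-weight agent is preferred — the same scarcity phenomenon noted for weighted leximin, where a single item is best given to a lower-priority agent. Pinning down this boundary correctly is the delicate step: it forces the zero branch of $\phi$ to carry the sign of $p$ in its weight dependence (ordering zero-utility agents by $\texttt{sign}(p)\,w_i$, consistent with the $\texttt{sign}(p)$ convention recorded in \Cref{table:results-summary}), exactly as the positive branch already does. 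Once the zero branch is read this way, (G1) holds in every case, and \Cref{thm:weighted-yankee-leximin} delivers both $\Psi$-optimality and lexicographic dominance.
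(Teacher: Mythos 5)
Your proposal follows essentially the same route as the paper's own proof: define $\succeq_\Psi$ lexicographically (size of $P_X$ first, then the $p$-mean value), observe that (C1) and (G2) are routine, and verify (G1) by cases on whether $x_i, x_j$ are zero. Your positive--positive analysis via the monotonicity of $S \mapsto S^{1/p}$ (increasing for $p > 0$, decreasing for $p < 0$) is exactly the paper's Cases 2 and 3, and your mixed case matches the corresponding part of its Case 1.

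The point where you diverge --- the case $x_i = x_j = 0$ --- is where you are right and the paper is not. The paper dismisses this case by asserting it is ``the same as Cases 1 and 2 of \Cref{thm:weighted-nash},'' but that analogy breaks: the Nash zero branch is the weight-independent constant $M$, so there both $\phi$-values and both outcomes genuinely coincide, whereas the $p$-mean zero branch $M w_i$ in \eqref{eq:phi-weighted-p-mean} separates agents by weight. For $p \in (0,1]$ the separation points the right way, since $(S + w_i)^{1/p} > (S + w_j)^{1/p}$ iff $w_i > w_j$; but for $p < 0$ it points the wrong way: $1/p < 0$ makes $t \mapsto t^{1/p}$ decreasing, so the criterion strictly prefers activating the \emph{lower}-weight zero agent while $M w_i$ selects the higher-weight one, and (G1) fails. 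This is not merely a gap in the paper's proof --- the theorem fails as stated for the gain function \eqref{eq:phi-weighted-p-mean} when $p < 0$. Take $p = -1$, two agents with $w_1 = 1$, $w_2 = 2$, and a single good valued at $1$ by both: General Yankee Swap hands the good to agent $2$ (since $M w_2 > M w_1$), giving welfare $(w_2 \cdot 1^{-1})^{-1} = \tfrac12$, whereas giving it to agent $1$ yields $(w_1 \cdot 1^{-1})^{-1} = 1$, and both allocations have $|P_X| = 1$. Your remedy --- letting the zero branch order zero-utility agents by $\texttt{sign}(p)\, w_i$ while still dominating every positive-branch value (e.g.\ $\phi(X,i) = (1, \texttt{sign}(p)\, w_i)$ for zero agents versus $(0, \cdot)$ otherwise, compared lexicographically) --- restores (G1) in every case, after which \Cref{thm:weighted-yankee-leximin} delivers the theorem. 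One further detail worth recording when you formalize this: even for $p > 0$, the constant $M$ must be chosen so that $M w_j$ (not merely $M$) exceeds every positive-branch value, since the mixed case of (G1) compares $M w_j$ against $w_i \,|(x_i+1)^p - x_i^p|$ with possibly $w_j \ll w_i$; the paper's stated choice of $M$ does not guarantee this.
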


\subsection{On the Complexity of Computing $\phi$}
For all the justice criteria discussed above, $b = O(1)$ ($b$ is the size of the vector output by $\phi$), and $\phi$ can be trivially computed in $O(T_v)$ time where $T_v$ is the complexity of computing the value of a bundle. Interestingly, we can trivially speed up the computation of $\phi$ even further to $O(1)$ time. The only queries we make to $\phi$ under General Yankee Swap is with the allocation $X$ maintained by the algorithm. Since this allocation is always non-redundant (Lemma \ref{lem:general-yankee-nonredundant}), we have $v_i(X_i) = |X_i|$ for any agent $i$. Therefore, we can store the sizes of the allocated bundles in $X$ at no extra cost to the time complexity and compute $\phi$ in $O(1)$ time.
\section{Limitations}
The previous section describes several fairness objectives for which Yankee Swap works. This raises the natural question: {\em is there any reasonable fairness notion where Yankee Swap does not work?}

The main limitation of Yankee Swap is that it cannot be used to achieve envy based fairness properties. An allocation $X$ is said to be envy free if $v_i(X_i) > v_i(X_j)$ for all $i, j \in N$. Indeed, this is not always possible to achieve. This impossibility has resulted in several relaxations like envy free up to one good (EF1) \citep{Lipton2004EF1, Budish2011EF1} and envy free up to any good (EFX) \citep{Caragiannis2016MNW, Plaut2017EFX}. However, at its core, these envy relaxations are still fairness objectives that violate the Pareto dominance property (C1): by increasing the utility of an agent currently being envied by other agents, we decrease the fairness of the allocation while Pareto dominating the allocation. One work around for this is using Yankee Swap to compute leximin allocations and hope that leximin allocations have good envy guarantees. This works when all the agents have equal weights --- prioritized Lorenz dominating allocations are guaranteed to be EFX. However, when agents have different weights, Yankee Swap fails to compute weighted envy free up to one good (WEF1) allocations \citep{chakraborty2021weighted}. For clarity, an allocation $X$ is WEF1 when for all $i, j \in N$ $\frac{v_i(X_i)}{w_i} > \frac{v_i(X_j - g)}{w_j}$ for some $g \in X_j$. Yankee Swap fails mainly due to the fact that when agents have MRF valuations, it may be the case that no \MAXUSW allocation is WEF1. Therefore, irrespective of the choice of $\phi$, Yankee Swap cannot always compute a WEF1 allocation (Proposition \ref{prop:yankee-swap-usw}) This is illustrated in the following example.

\begin{example}
We have two agents $\{1, 2\}$ and four goods $\{g_1, \dots, g_4\}$. We have $w_1 = 10$ and $w_2 = 1$. 
The valuation function of every agent $i$ is the MRF $v_i(S) = \min\{|S|, 2\}$.
Any \MAXUSW allocation will assign two goods to both agents. 
However, no such allocation is WEF1: agent $1$ (weighted) envies agent $2$. 
This is because agent $1$'s weighted utility is $\frac{2}{10}$, but agent $2$'s weighted utility is $2$; even after dropping any good, agent $2$'s weighted utility (as seen by both agents) will be $1$. 

There does exist a WEF1 allocation in this example but achieving WEF1 comes at an unreasonable cost of welfare. To achieve WEF1, we must allocate $3$ goods to agent $1$. However, the third good offers no value to agent $1$. In effect, we are obligated to ``burn'' an item in order to satisfy WEF1.
\end{example}

This example seems to suggest that WEF1 is not a suitable envy relaxation for MRF valuations. Indeed, ours is not the first work with this complaint. \citet{chakraborty2022generalized} and \citet{montanari2022weightedenvy} offer alternatives to the WEF1 notion that are more suitable for submodular valuations. Interestingly, as \citet{montanari2022weightedenvy} show, General Yankee Swap can be used to compute a family of these measures called Transferable Weighted Envy Freeness (TWEF) when agents have matroid rank valuations. They show this by first introducing a family of justice criteria called weighted harmonic welfare (parameterized by a variable $x$) which satisfy (C1) and (C2) and then showing that max weighted harmonic welfare allocations are TWEF. 

This suggests that General Yankee Swap can be used to compute allocations which satisfy several envy notions, but finding the appropriate gain function to do so is a more complex task.



\section{Conclusions and Future Work}
In this work, we study fair division of goods when agents have MRF valuations. Our main contribution is a flexible framework that optimizes several fairness objectives. The General Yankee Swap framework is fast, strategyproof and always maximizes utilitarian social welfare.

The General Yankee Swap framework is a strong theoretical tool. We believe it has several applications outside the ones described in Section \ref{sec:applications}. This is a very promising direction for future work.
One specific example is the computation of weighted maximin share allocations. 
Theorem \ref{thm:fair-share} shows that to compute an allocation which gives each agent the maximum fraction of their fair share possible, it suffices to simply input these fair shares into Yankee Swap. Therefore, the problem of computing fair share allocations have effectively been reduced to the problem of computing fair shares when agents have MRF valuations. It would be very interesting to see the different kinds of weighted maximin shares that can be computed and used by Yankee Swap.

Another interesting direction is to show the tightness of this framework. It is unclear whether the conditions (C1) and (C2) are absolutely necessary for Yankee Swap to work. If they are not, it would be interesting to see how they can be relaxed to allow for the optimization of a larger class of fairness properties. 



%
\section*{Acknowledgements}
The authors would like to thank Cyrus Cousins and anonymous reviewers at EC 2023 for useful feedback and suggestions.

\bibliographystyle{ACM-Reference-Format}
\bibliography{abb,literature}

\newpage
\appendix

\section{Missing Proofs from Section \ref{sec:strategyproofness}}
\lemmonotonicity*

\begin{proof}
Define the new valuation functions by $\{v'_j\}_{j \in N}$ where $v'_j = v_j$ for all $j \in N - i$. To prevent any ambiguity, whenever we discuss the $\Psi$ value of $X$, we implicitly discuss it with respect to the valuations $v$. Similarly,  whenever we discuss the $\Psi$ value of $Y$ we implicitly discuss it with respect to the valuations $v'$.

Assume for contradiction that $|Y_i| < |X_i|$. Let $T$ be a subset of $Y_i$ such that $|T| = v_i(T) = v_i(Y_i)$. Define an allocation $Z$ as $Z_h = Y_h$ for all $h \in N - i$ and $Z_i = T$; allocate the remaining goods in $Z$ to $Z_0$. Note that both $X$ and $Z$ are non-redundant under both valuation profiles. This implies for all $j \in N$, $v'_j(X_j) = v_j(X_j) = |X_j|$ and $v'_j(Z_j) = v_j(Z_j) = |Z_j|$.

Let us compare $X$ and $Y$. Let $p \in N$ be the agent with highest $\phi(Y, p)$ such that $|Y_p| < |X_p|$; break ties by choosing the least $p$. Such an element is guaranteed to exist since $|Y_i| < |X_i|$.

If there exists no $q \in N$ such that $|Y_q| > |X_q|$, we must have $|Y_0| > |X_0|$ (since $|X_i| > |Y_i|$). Using Lemma \ref{lem:augmentation-sufficient} with the allocations $Y$ and $X$ with $p$, we get that there is a path from $p$ to $0$ in the exchange graph of $Y$. Transferring goods along the shortest such path results in an allocation with a higher USW than $Y$ contradicting the fact that $Y$ is \MAXUSW.

Let $q \in N$ be the agent with highest $\phi(X, q)$ such that $|Y_q| > |X_q|$. Break ties by choosing the least $q$. Note that since $Z$ and $Y$ only differ in $i$'s bundle and $|Y_i| < |X_i|$, we must have $|Y_q| = |Z_q|$. Therefore, $|Z_q| > |X_q|$ as well.

Consider two cases:
\begin{enumerate}[(i)]
    \item $\phi(X, q) > \phi(Y, p)$,
    \item $\phi(X, q) = \phi(Y, p)$ and $q < p$
\end{enumerate}
If any of the above two conditions occur, then invoking Lemma \ref{lem:augmentation-sufficient} with allocations $X$, $Z$ and the agent $q$, there exists a transfer path from $q$ to some agent $k$ where $|Z_k| < |X_k|$. Transferring along the shortest such path gives us a non redundant (w.r.t. the valuation profile $v$) allocation $X'$ where $|X'_q| = |X_q| + 1$ and $|X'_k| = |X_k| - 1$ (Lemma \ref{lem:path-augmentation}). 
Since $v'_j(X'_j) \ge v_j(X'_j)$ for all $j \in N$, we must have that $X'$ is non-redundant with respect to both valuation profiles $v$ and $v'$.
By our definition of $Z$, if $k \in N - i$, we have $|X_k| > |Z_k| = |Y_k|$ and if $k = i$, we have $|X_i| > |Y_i|$. Therefore, we have $|X_k| > |Y_k|$ if $k \ne 0$.

If $k = 0$, we improve USW (w.r.t. the valuation profile $v$) contradicting the fact that $X$ is $\Psi$ maximizing. 

Let $X''$ be an allocation obtained from starting at $X$ and removing a good from $k$. From (G1), we have that if $\phi(X'', q) \ge \phi(X'', k)$, we have $X' \succeq_{\Psi} X$ (with respect to the valuations $v$) with equality holding if and only if $\phi(X'', k) = \phi(X'', q)$. 

$X, X', X''$ and $Y$ are non-redundant with respect to the new valuation function profile $v'$. We can therefore, compare their $\phi$ values using (G2).

\textbf{For case (i):} If $k \ne 0$, we have  $\phi(X'', k) = \phi(X', k) \le \phi(Y, k) \le \phi(Y, p) < \phi(X, q) = \phi(X'', q)$. This gives us $\phi(X'', k) < \phi(X'', q)$: a contradiction.

\textbf{For case (ii):} If $k \ne 0$, we have  $\phi(X'', k) \le \phi(Y, k) \le \phi(Y, p) = \phi(X, q) = \phi(X'', q)$. If any of these weak inequalities are strict, we can use analysis similar to that of case (i) to show that $X$ does not maximize $\Psi$. Therefore, all the weak inequalities must be equalities and we must have $\phi(X'', k) = \phi(Y, k) = \phi(Y, p) = \phi(X, q) = \phi(X'', q)$.

Since $\phi(X'', k) = \phi(X'', p)$, we have $X' =_{\Psi} X$. 
Moreover, by our choice of $p$ we have $p \le k$ and by assumption, we have $q < p$. Combining the two, this gives us $q < k$. Therefore, $X'$ lexicographically dominates $X$ --- a contradiction to Theorem \ref{thm:weighted-yankee-leximin}.

Let us now move on to the remaining two possible cases

\begin{enumerate}[(i)]\addtocounter{enumi}{2}
    \item $\phi(X, q) < \phi(Y, p)$,
    \item $\phi(X, q) = \phi(Y, p)$ and $q > p$
\end{enumerate}

Recall that $X$ and $Y$ are non-redundant with respect to the new valuation functions $v'$.
If any of the above two conditions occur, then invoking Lemma \ref{lem:augmentation-sufficient} with allocations $Y$, $X$ and the agent $p$, there exists a transfer path from $p$ to some agent $l$ in the exchange graph of $Y$ where $|Y_l| > |X_l|$. Transferring along the shortest such path gives us a non-redundant allocation $Y'$ (according to the valuation $v'$) where $|Y'_p| = |Y_p| + 1$ and $|Y'_l| = |Y_l| - 1$ (Lemma \ref{lem:path-augmentation}). 

Let $Y''$ be an allocation obtained from starting at $Y$ and removing a good from $l$. From (G1), we have that if $\phi(Y'', p) \ge \phi(Y'', l)$, we have $Y' \succeq_{\Psi} Y$ (with respect to the valuations $v'$) with equality holding if and only if $\phi(Y'', l) = \phi(Y'', p)$. 

If $l = 0$, we improve USW (according to the valuations $v'$) contradicting the fact that $Y$ is \MAXUSW. 

\textbf{For case (iii):} If $l \ne 0$, we have $\phi(Y'', l)  \le \phi(X, l) \le \phi(X, q) < \phi(Y, p) = \phi(Y'', p)$. Compressing the inequality, we get that $\phi(Y'', l) < \phi(Y'', p)$: a contradiction.

\textbf{For case (iv):} If $l \ne 0$, we have  $\phi(Y'', l)  \le \phi(X, l) \le \phi(X, q) = \phi(Y, p) = \phi(Y'', p)$. If any of these weak inequalities are strict, we can use analysis similar to that of case (iii) to show that $Y$ does not maximize $\Psi$. Therefore, all the weak inequalities must be equalities and we must have $\phi(Y'', l)  = \phi(X, l) = \phi(X, q) = \phi(Y, p) = \phi(Y'', p)$. This implies that $Y' =_{\Psi} Y$. 

Moreover, by our choice of $q$ we have $q \le l$ and by assumption, we have $p < q$. Combining the two, this gives us $p < l$. Therefore, $Y'$ lexicographically dominates $Y$ --- a contradiction to Theorem \ref{thm:weighted-yankee-leximin}.

Since cases (i)--(iv) cover all possible cases, our proof is complete.
\end{proof}

\section{Missing Proofs from Section \ref{sec:time-complexity}}

\lemfinddesired*
\begin{proof}
Since agents have MRF valuations, if $v_i(S \cup B) > v_i(S)$ there must be some good $g \in B$ such that $\Delta_i(S, g) = 1$. Such a good $g$ must be either in $B_1$ or in $B_2$. Therefore, either $v_i(S \cup B_1) > v_i(S)$ or $v_i(S \cup B_2) > v_i(S)$. So the procedure must be correct. The time complexity comes from the fact that the set $B$ gets halved at every iteration and all other steps in the iteration can be done in $O(m) = O(T_v)$ time.
\end{proof}

\lemgetdistances*
\begin{proof}
Note that the procedure $\FindDesired(i, X_i, B)$ outputs goods in $B$ that the node $s$ has outgoing edges to in $\cal G(X)$; that is, goods in $F_i(X) \cap B$. Similarly, when the allocation $X$ is non-redundant, for some good $a \in X_j$, the procedure $\FindDesired(j, X_j - a, B)$ outputs goods in $B$ that the node $a$ has edges to in $\cal G(X)$. The correctness then follows from the breadth first search approach of the procedure. 

Each in node is added only once to $Q$ and removed only once from $B$. Therefore, $\FindDesired$ is called only $O(m)$ times. Since, all other steps can be done in $O(1)$ time, the total complexity of this algorithm is $O(m T_v \log m)$ (from Lemma \ref{lem:find-desired}).
\end{proof}

\section{Missing Proofs from Section \ref{sec:applications}}

\thmweightedleximin*
\begin{proof}
Formally, for any two allocations $X \succ_{\Psi} Y$ if $\vec e^X$ lexicographically dominates $\vec e^Y$. It is easy to see that $\Psi$ trivially satisfies (C1) and $\phi$ trivially satisfies (G2); so we only show (G1).

Assume $\phi(\vec x, i) > \phi(\vec x, j)$ for some agents $i, j \in N$ and vector $\vec x \in \Z^n_{\ge 0}$. Let $\vec y$ be the allocation that results from starting at $\vec x$ adding one unit to $x_i$ and let $\vec z$ be the allocation that results from starting at $\vec x$ adding one unit to $x_j$.

If $\phi(\vec x, i) > \phi(\vec x, j)$, then one of the following two cases must be true.

\textbf{Case 1:} $\frac{x_i}{w_i} < \frac{x_j}{w_j}$. Since it is always better to add utility to agents with lower weighted utility, we have $\vec y \succ_{\Psi} \vec z$.

\textbf{Case 2:} $\frac{x_i}{w_i}= \frac{x_j}{w_j}$ and $w_i < w_j$. If this is true, we have $\frac{y_j}{w_j} = \frac{z_i}{w_i}$ by assumption. However, $\frac{y_i}{w_i} = \frac{x_i + 1}{w_i} > \frac{x_j + 1}{w_j} = \frac{z_j}{w_j}$. Since the two vectors differ only in the values of the indices $j$ and $i$, we can conclude that $\vec y \succ_{\Psi} \vec z$.

This implies that when $\phi(\vec x, i) > \phi(\vec x, j)$, we have $\vec y \succ_{\Psi} \vec z$ as required. 

When $\phi(\vec x, i) = \phi(\vec x, j)$, we must have $\frac{x_i}{w_j}= \frac{x_j}{w_j}$ and $w_i = w_j$. This gives us $\frac{y_j}{w_j} = \frac{z_i}{w_i}$ and $\frac{y_i}{w_i} = \frac{x_i + 1}{w_i} = \frac{x_j + 1}{w_j} = \frac{z_j}{w_j}$ which implies that $\vec y =_{\Psi} \vec z$. 
\end{proof}

\thmweightedpmean*
\begin{proof}
Formally, $X \succ_{\Psi} Y$  if and only if one of the following conditions hold:
\begin{enumerate}[(a)]
    \item $|P_X| > |P_Y|$ 
    \item $|P_X| = |P_Y|$ and $\left ( \sum_{i \in P_X} w_i \times v_i(X_i)^p \right )^{1/p} > \left ( \sum_{i \in P_Y} w_i \times v_i(Y_i)^p \right)^{1/p}$
\end{enumerate}
It is easy to see that $\Psi$ satisfies (C1) and $\phi$ satisfies (G2).  Similar to the previous results, we only show (G1). For any vector $\vec x \in \Z^n_{\ge 0}$, consider two agents $i$ and $j$. Let $\vec y$ be the vector that results from starting at $\vec x$ and adding $1$ to $x_i$ and $\vec z$ be the vector that results from starting at $\vec x$ and adding $1$ to $x_j$. Let $P_{\vec x}$ be the number of indices in $\vec x$ with a positive value. We have the following three cases.
\begin{enumerate}[label={\bfseries Case \arabic*:},itemindent=*,leftmargin=0cm]
\item $x_i \ge x_j = 0$. 
The proof for this case is the same as those of Cases 1 and 2 in Theorem \ref{thm:weighted-nash}.
\item $p \in (0, 1]$ and $x_i, x_j > 0$. 
We have $P_{\vec y} = P_{\vec z} = P_{\vec x}$. This gives us:
\begin{align*}
    \vec y \succeq_{\Psi} \vec z
    \Leftrightarrow  \sum_{i \in P_{\vec x}} w_iy_i^p  \ge \sum_{i \in P_{\vec x}} w_i z_i^p 
    \Leftrightarrow \sum_{i \in P_{\vec x}} w_i y_i^p - \sum_{i \in P_{\vec x}} w_i x_i^p \ge \sum_{i \in P_{\vec x}} w_i z_i^p - \sum_{i \in P_{\vec x}} w_i x_i^p \\
    \Leftrightarrow w_i [(x_i + 1)^p - x_i^p] \ge w_i [(x_j + 1)^p - x_j^p] \\
    \Leftrightarrow \phi(\vec x, i) \ge \phi(\vec x, j)
\end{align*}

\item $p < 0$ and $x_i, x_j > 0$. We still have $P_{\vec y} = P_{\vec z} = P_{\vec x}$. We have:
\begin{align*}
    \vec y \succeq_{\Psi} \vec z
    \Leftrightarrow  \sum_{i \in P_{\vec x}} w_i y_i^p  \le \sum_{i \in P_{\vec x}} w_i z_i^p 
    \Leftrightarrow \sum_{i \in P_{\vec x}} w_i y_i^p - \sum_{i \in P_{\vec x}} w_i x_i^p \le \sum_{i \in P_{\vec x}} w_i z_i^p - \sum_{i \in P_{\vec x}} w_i x_i^p \\
    \Leftrightarrow w_i [(x_i + 1)^p - x_i^p] \le w_j [(x_j + 1)^p - x_j^p] \\
    \Leftrightarrow \phi(\vec x, i) \ge \phi(\vec x, j)
\end{align*}
In both Cases 2 and 3, we can replace the inequalities with equalities.
\end{enumerate}
\end{proof}

\end{document}